\documentclass[smallextended]{svjour3}  
\smartqed  
\usepackage{latexsym}

\usepackage{graphicx}
\usepackage{caption}
\usepackage{subcaption}
\usepackage{siunitx}
\usepackage{booktabs}
\usepackage[numbers]{natbib}

\usepackage[misc,geometry]{ifsym} 
\usepackage{scrtime}
\usepackage[dont-mess-around]{fnpct}
\usepackage{etoolbox}
\usepackage{amsmath}
\usepackage{xcolor}
\usepackage[linesnumbered,ruled,vlined]{algorithm2e}
\usepackage{docmute}
\usepackage{import}
\usepackage{mathtools}
\usepackage{multirow}
\usepackage[utf8]{inputenc}
\usepackage[english]{babel}
\usepackage{hyperref}
\usepackage{cleveref}
\usepackage{pbox}
\usepackage[shortlabels]{enumitem}

\newcommand*{\affaddr}[1]{#1} 
\newcommand*{\affmark}[1][*]{\textsuperscript{#1}} 
\newcommand{\RNum}[1]{\uppercase\expandafter{\romannumeral #1\relax}}

\SetDataSty{myargfont}
\SetCommentSty{mycommfont}
\captionsetup{compatibility=false}
\usepackage[labelformat=simple]{subcaption}

\Crefformat{figure}{#2Fig.~#1#3}
\Crefmultiformat{figure}{Figs.~#2#1#3}{ and~#2#1#3}{, #2#1#3}{ and~#2#1#3}
\crefrangeformat{figure}{figs. #3#1#4 to #5#2#6}
\Crefformat{table}{#2Table~#1#3}
\Crefmultiformat{table}{Table~#2#1#3}{ and~#2#1#3}{, #2#1#3}{ and~#2#1#3}
\crefrangeformat{table}{table #3#1#4 to #5#2#6}
\Crefformat{equation}{#2Eq.~(#1#3)}
\crefrangeformat{equation}{eqs. #3(#1)#4 to #5(#2)#6}
\Crefmultiformat{equation}{Eqs.~#2(#1)#3}{ and~#2(#1)#3}{, #2(#1)#3}{ and~#2(#1)#3}

\Crefformat{algorithm}{#2Algorithm~#1#3}
\Crefmultiformat{algorithm}{Algorithms~#2#1#3}{ and~#2#1#3}{, #2#1#3}{ and~#2#1#3}
\crefrangeformat{algorithm}{Algorithms #3(#1)#4 to #5(#2)#6}

\Crefformat{lemma}{#2Lemma~#1#3}
\Crefmultiformat{lemma}{Lemmas~#2#1#3}{ and~#2#1#3}{, #2#1#3}{ and~#2#1#3}
\crefrangeformat{lemma}{Lemmas #3#1#4 to #5#2#6}

\Crefformat{section}{#2Section~#1#3}
\Crefmultiformat{section}{Sections~#2#1#3}{ and~#2#1#3}{, #2#1#3}{ and~#2#1#3}
\crefrangeformat{section}{Sections #3#1#4 to #5#2#6}
\Crefformat{subsection}{#2Subsection~#1#3}
\Crefmultiformat{subsection}{Subsections~#2#1#3}{ and~#2#1#3}{, #2#1#3}{ and~#2#1#3}
\crefrangeformat{subsection}{Subsections #3#1#4 to #5#2#6}

\SetKwInput{Input}{Input}
\SetKwInOut{Output}{Output}

\SetKwData{VmSet}{$vmSet$}
\SetKwData{ServerSet}{$serverSet$}
\SetKwData{C}{CS}
\SetKwData{CSet}{SCS}
\SetKwData{Cs}{CSs}
\SetKwData{L}{LS}
\SetKwData{c}{$cs$}
\SetKwData{cset}{$scs$}
\SetKwData{l}{$ls$}
\SetKwData{LA}{$\leftarrow$}

\SetKwFunction{setStartTime}{setStartTime}
\SetKwFunction{setEndTime}{setEndTime}
\SetKwFunction{calculateValue}{calculateValue}
\SetKwFunction{findVMWithEarliestStartTime}{findVMWithEarliestStartTime}
\SetKwFunction{findTimeMostVmsContaining}{findTimeMostVmsContaining}
\SetKwFunction{getVMSetContainingTime}{getVMSetContainingTime}
\SetKwFunction{descendingSortByDuration}{descendingSortByDuration}
\SetKwFunction{shuffle}{shuffle}
\SetKwFunction{canAccommodate}{canAccommodate}
\SetKwFunction{accommodate}{accommodate}
\SetKwFunction{BB}{BB}
\SetKwFunction{DDFF}{DDFF}
\SetKwFunction{p}{p}
\SetKwFunction{main}{main}
\SetKwFunction{SingleSliding}{SingleSliding}
\SetKwFunction{SplitVirtualMachineRequest}{SplitVirtualMachineRequest}
\SetKwFunction{addLeftChild}{addLeftChild}
\SetKwFunction{addRightChild}{addRightChild}
\SetKwFunction{add}{add}

\SetKw{cluster}{cluster}
\SetKw{T}{T}   

%
\begin{document}
	
	\title{Reducing the Upfront Cost of Private Clouds with Clairvoyant Virtual Machine Placement
		\thanks{The work described in this paper was supported by the National High-tech R\&D Program of China (863 Program) under Grant 2013AA01A215 and the National Laboratory of High-effect Server and Storage Techniques under Grant 2014HSSA05.}}
	
	\titlerunning{Reducing the Upfront Cost of Private Clouds with Clairvoyant VMP}        
	
	\author{Yan Zhao \protect\affmark[1]        \and
		Hongwei Liu  \Letter \affmark[1]     \and
		Yan Wang \affmark[2]      \and
		Zhan Zhang \affmark[1]			 \and
		Decheng Zuo \affmark[1]
	}
	
	
	\institute{Yan Zhao \\
		\email{yanzhao@hit.edu.cn}           
		\and
		\\Hongwei Liu  \\
		\email{liuhw@hit.edu.cn} 
		\and
		\\Yan Wang  \\
		\email{yan.wang@mq.edu.au} 
		\and
		\\Zhan Zhang  \\
		\email{zz@ftcl.hit.edu.cn} 
		\and
		\\Decheng Zuo \\
		\email{zuodc@hit.edu.cn}      
		\\ \affaddr{\affmark[1] Department of Computer Science and Technology, Harbin Institute of Technology, Heilongjiang, China\\
		\affmark[2] Department of Computing, Macquarie University, Sydney, Australia}\\
	}
	
	\date{Received: date / Accepted: date}

	\maketitle
	
	\begin{abstract}
		Although public clouds still occupy the largest portion of the total cloud infrastructure, private clouds are attracting increasing interest from both industry and academia because of their better security and privacy control. According to the existing studies, the high upfront cost is among the most critical challenges associated with private clouds. To reduce cost and improve performance, virtual machine placement (VMP) methods have been extensively investigated; however, few of these methods have focused on private clouds. This paper proposes a heterogeneous and multidimensional clairvoyant dynamic bin packing (CDBP) model, in which the scheduler can conduct more efficient VMP processes using additional information on the arrival time and duration of virtual machines to reduce the datacenter scale and thereby decrease the upfront cost of private clouds. In addition, a novel branch-and-bound algorithm with a divide-and-conquer strategy (DCBB) is proposed to effectively and efficiently handle the derived problem. One state-of-the-art and several classic VMP methods are also modified to adapt to the proposed model to observe their performance and compare with our proposed algorithm. Extensive experiments are conducted on both real-world and synthetic workloads to evaluate the accuracy and efficiency of the algorithms. The experimental results demonstrate that DCBB delivers near-optimal solutions with a convergence rate that is much faster than those of the other search-based algorithms evaluated. In particular, DCBB yields the optimal solution for a real-world workload with an execution time that is an order of magnitude shorter than that required by the original branch-and-bound (BB) algorithm.
		\keywords{virtual machine placement \and dynamic bin packing \and private cloud computing \and resource management}
	\end{abstract}

	\section{Introduction}
	\label{intro}
	Cloud computing is a computing paradigm that enables convenient, measurable, and on-demand network access to a pool of configured physical resources, such as CPU and memory. It can be categorized into three major deployment models: public clouds, private clouds and hybrid clouds \cite{mell2011nist}. Although public clouds still occupy the largest portion of the total cloud infrastructure, private clouds are attracting increasing attention from both industry and academia \cite{cloud2017idc} because of their better security and privacy control. According to a 2017 survey \cite{kim2017cloud} focusing on the adoption of cloud computing among IT professionals, 95\% of respondents used cloud platforms, and 75\% used private clouds or hybrid clouds. Moreover, previous studies \cite{kim2017cloud,goyal2014public} have revealed that a high upfront cost is among the most critical challenges associated with private clouds. Thus, there is a demand for efficient resource management methods those can reduce the scale of datacenters in order to popularize private cloud computing.\par
	Although the existing resource management methods have been well exploited, most of them are designed for general or public clouds. There is a need for more research based on the distinctive characteristics of private clouds, such as their predictable workloads and limited resources, to develop more efficient resource scheduling methods for the private cloud environment. In recent years, researchers have proposed multiple resource management methods, including task allocation \cite{ficco2017optimized,ramanathan2018towards} and workflow scheduling \cite{ye2017hybrid} methods, for achieving various goals specifically in the private cloud environment.\par
	The motivation of this work is to propose efficient virtual machine placement (VMP) methods for private clouds, in which resources are more limited and the workloads are more predictable than those of public clouds. The aim is to reduce the high upfront cost of datacenters, which is a key barrier to the popularization of private clouds. To achieve this goal, this work focuses on minimizing the number of servers (\#servers), which can also contribute to energy efficiency. Reducing the \#servers is one of the most straightforward and efficient methods of reducing the upfront cost of a private cloud since it can directly lower the costs of site use, server purchase, refrigeration, etc. Since resources are relatively limited in private clouds, we employ advance reservation \cite{toosi2015revenue,de2017bare} to increase the resource utilization ratio and reduce resource contention. \par
	VMP is a critical resource management method for cloud computing to improve performance, lower resource consumption and reduce maintenance cost \cite{masdari2016overview}. Many VMP methods have been proposed with various objectives, including effective load balancing, high energy efficiency, and high network traffic efficiency. However, the private cloud environment, in which resources are more limited and workloads are more predictable, has received little attention. As private clouds receive increasing interest from industry and academia, one of the emerging challenges of VMP is to determine how to conduct efficient scheduling to minimize the \#servers and thus reduce the high upfront cost in the private cloud environment.\par
	
	Although the majority of research and industry applications still focus on on-demand provision, advance reservation has been attracting increasing interest in the literature. Similar to the widely adopted appointment system \cite{feldman2014appointment}, the advance reservation approach can improve the scheduling efficiency and mitigate resource contention by making use of additional time information. Applications of advance reservation in cluster computing \cite{irwin2006sharing,lawson2002multiple} and grid computing \cite{elmroth2009standards,farooq2005impact} have been extensively researched to exploit its potential. In recent years, researchers have applied advance reservation in cloud computing to improve energy efficiency \cite{de2017bare} and maximize revenue \cite{toosi2015revenue,chase2017joint}. Moreover, cloud providers (e.g., Amazon \footnote{https://aws.amazon.com/}) have also provided reserved instances to satisfy user requirements. Because of the more predictable workloads and the resource limitations in private clouds, advance reservation can be effectively employed to increase the resource utilization ratio and reduce resource contention.
	
	Bin packing approaches are typically employed to address VMP problems. However, classic bin packing concentrates only on resources and ignores time information, which makes it difficult to address problems with an additional time dimension (e.g., advance reservation). Compared to classic bin packing, dynamic bin packing (DBP) can better handle VMP problems involving reservations since it considers time factors. By definition, DBP \cite{coffman1983dynamic} aims to model scenarios in which items arrive and depart randomly. DBP can be further classified into clairvoyant and nonclairvoyant settings depending on when the scheduler becomes aware of the departure times of virtual machines (VMs). Initially, researchers focused on nonclairvoyant dynamic bin packing (NCDBP), in which the system does not know the departure times of VMs until they have departed. However, with advances in workload prediction techniques \cite{park2017runtime,calheiros2015workload,gandhi2012hybrid}, clairvoyant dynamic bin packing (CDBP), in which the system becomes aware of the departure times of VMs when they arrive, has received increasing attention in recent years. Although efforts have been made to apply the DBP model in cloud computing, few studies have been conducted that have considered a heterogeneous environment or multidimensional resources, and this research gap impedes the further application of DBP in this context.\par
	
	The present research is applicable to the following scenario:
	\begin{itemize}
		\item{The employees of an organization need to use computing resources to support their work.}
		\item{The workloads are reasonably predictable and stable with regard to the required amounts of resources and their periods of usage.}
		\item{The company is concerned with issues such as security and confidentiality and thus prefers a private cloud.}
		\item{The organization hopes to minimize the datacenter size to reduce the upfront cost of building its own datacenter.}
	\end{itemize}

	The main contributions of this paper are as follows:
	\begin{enumerate}
		\item  A novel model and algorithm are proposed to reduce the upfront cost, which is the main barrier to the popularization of private clouds, by reducing the total \#servers required.
		\item  A formal definition of the enhanced CDBP problem with a heterogeneous environment and multidimensional resources is presented to better address VMP problems with an additional time dimension.
		\item  A novel branch-and-bound algorithm with a divide-and-conquer strategy (DCBB) is proposed to deliver near-optimal scheduling solutions within an execution time that is significantly shorter than those required by the other search-based algorithms evaluated.
		\item The previously proposed ant colony system with an order exchange and migration technique (OEMACS) is enhanced by endowing it with the ability to handle heterogeneous environments, multidimensional resources, and additional time information, thus making the algorithm more practical.
		\item  Various algorithms are analyzed, evaluated, and compared from different perspectives on real-world and synthetic workloads.
	\end{enumerate}
	
	A list of common acronyms used throughout this paper is presented in~\Cref{table:abbrGeneral} for the reader's convenience. \par
	\begin{table}
		\centering
		\caption{List of the main acronyms used in this paper}
		\label{table:abbrGeneral}       
		\begin{tabular}{ll}
			\hline\noalign{\smallskip}
			Acronym	& Definition                                    \\
			\noalign{\smallskip}\hline\noalign{\smallskip}
			\#servers	& Number of servers                            \\
			\#VMs	& Number of virtual machines                 \\
			BB	&   Branch-and-bound                                  \\
			\C   & Clustered set of virtual machines					 \\
			CDBP	& Clairvoyant dynamic bin packing                \\
			DBP	    & Dynamic bin packing                            \\
			DCBB	& Branch-and-bound algorithm with a divide-and-conquer strategy             \\
			DDFF	& Duration-descending first fit                  \\
			DDFF$^{+}$	& Duration-descending first fit with a shuffling process                 \\
			FF	& First fit                                          \\
			FF$^{+}$	& First fit with a shuffling process                                         \\
			MGC & Most-greedy clustering					 \\
			NCDBP & Nonclairvoyant dynamic bin packing              \\
			OEMACS	&  Ant colony system with an order exchange and migration technique\\
			OEMACS$^{+}$	&  Time-aware and multidimensional OEMACS\\
			\CSet    & Set of clustered sets of virtual machines		 \\
			VM	& Virtual machine                                    \\
			VMP	& Virtual machine placement                          \\
			\noalign{\smallskip}\hline
		\end{tabular}
	\end{table}
	
	The remainder of this paper is organized as follows. \Cref{sec:related works} first introduces related work. Then, \Cref{sec:system model} explains the system model. Next, \Cref{sec:scheduling algorithms} presents the scheduling algorithms, and \Cref{sec:implementation and experiments} describes the implementation and experiments. Finally, \Cref{sec:conclusion and future works} concludes the paper. 
	\section{Related Work}
	\label{sec:related works}
	In this paper, CDBP is applied in VMP to enhance the classic VMP model with an additional time dimension, corresponding algorithms are designed to address the modified problem, and the proposed methods are analyzed. In this section, the existing methods those focus on VMP and CDBP models are introduced and discussed. 
	\subsection{Virtual Machine Placement}
	VMP, an essential process for the initial placement of new VMs, has been extensively investigated in the literature \cite{masdari2016overview,usmani2016survey,panigrahy2011heuristics,gao2013multi,tang2015hybrid} on cloud computing resource management. The goal of this process is to initially allocate VMs to servers based on certain objectives, including energy conservation \cite{fard2017dynamic,zheng2016virtual,xiao2015solution}, cost minimization \cite{vu2014traffic,kanagavelu2014virtual}, resource saving \cite{gupta2018resource,liang2014placement,sayeedkhan2014virtual}, and load balancing \cite{xu2017survey}. \par
	Researchers have applied numerous algorithms to achieve efficient VMP. Accurate algorithms such as linear programming \cite{anand2013virtual}, stochastic integer programming \cite{chaisiri2009optimal}, and pseudo-Boolean optimization \cite{ribas2012modelling} have been studied to provide optimal scheduling solutions. Despite their accuracy, optimal algorithms are computationally prohibitive since VMP is well known to be an NP-hard problem \cite{panigrahy2011heuristics}. To accelerate the scheduling process, many heuristic algorithms based on a first-fit (FF) strategy \cite{panigrahy2011heuristics,fang2013power}, a best-fit strategy \cite{fang2013power,dong2013virtual}, a worst-fit strategy \cite{fang2013power} or a first-come-first-served strategy \cite{moreno2013improved} have been proposed to reduce the execution time, at the cost of some decrease in accuracy. With recent advances in evolutionary algorithms, researchers have also applied algorithms such as the frog leaping algorithm \cite{luo2014hybrid}, ant colony optimization \cite{gao2013multi,liu2016energy}, and genetic algorithms \cite{tang2015hybrid} for VMP to improve the scheduling performance. In 2016, Liu et al. \cite{liu2016energy} proposed OEMACS, an ant colony system with an order exchange and migration technique, which addresses VMP problems more effectively than other evolutionary and traditional algorithms do. \par
	Although extensive studies have been conducted in the field of VMP, only a small number of these studies have focused on private clouds, in which the workloads are more predictable and resources are more limited. Researchers have applied the genetic algorithms~\cite{quang2013genetic,agrawal2015power} and an artificial bee colony algorithm~\cite{agrawal2015power} to address VMP problems in private clouds with a focus on power efficiency, but these studies did not consider the distinctive characteristics (e.g., predictable workloads and limited resources) of private clouds to improve their performance. To better handle such scenarios, a formal representation of the VMP problem combined with CDBP is presented in this paper, and efficient algorithms are proposed to handle this problem effectively and efficiently. In addition, several VMP algorithms designed for the classic model, including FF and OEMACS, are adapted for use within our proposed heterogeneous and multidimensional CDBP model to observe their performance and enable comparisons with the proposed algorithm.
	\subsection{Clairvoyant Dynamic Bin Packing}
	Resource-aware VMP has typically been abstracted into a bin-packing problem that consists of a situation in which several items need to be packed into the minimum number of bins \cite{shi2013provisioning}. Bin packing and its $d$-dimensional variants have been extensively studied \cite{coffman2013bin,de1981bin,bansal2006bin,han1994multiple} since the 1960s. Many approximation algorithms have been proposed for $1$-dimensional bin packing \cite{coffman2013bin}. Fernandez de La Vega and Lueker \cite{de1981bin} proposed the first polynomial-time approximation scheme for $1$-dimensional bin-packing problems and proved that no such polynomial-time approximation scheme is possible for $2$-dimensional packing problems. In addition to the commonly considered case of homogeneous bins, several researchers have proposed algorithms for bin-packing problems in heterogeneous environments \cite{han1994multiple}. Although classic bin packing has been extensively employed to model resource-aware VMP, it encounters difficulties in describing time-enhanced cases, e.g., advance reservation.\par
	DBP \cite{coffman1983dynamic} is an extension of classic bin packing that additionally considers arrival time and duration, with items arriving and departing dynamically. Compared to classic bin packing, DBP can better model the advance reservation scenario and result in more efficient scheduling with time multiplexing. When DBP was first proposed and analyzed by Coffman et al. \cite{coffman1983dynamic} for allocation problems in computer systems, they focused on the NCDBP case, in which the scheduler does not know the departure times of VMs until they depart. Coffman et al. \cite{coffman1983dynamic} applied an FF strategy to reduce the \#servers required and proved that no online algorithm can obtain a performance bound that is lower than the FF bound. Later, researchers applied this model to reduce the total server usage time. Li et al. \cite{li2014dynamic} proved that the upper bound on the competitive ratio achieved with the FF strategy is 2$\mu$ + 13 and that the competitive ratio of the best fit is not bounded for $\mu$. They then proposed a modified FF strategy to improve the competitive ratio to $\mu$ + 8 when $\mu$ is known. Subsequently, Kamali et al. \cite{kamali2015efficient} improved the upper bound on the competitive ratio to $2\mu$ + 1, and Tang et al. \cite{tang2016first} reduced the value to $\mu$ + 4. \par
	In recent years, researchers have paid more attention to the application of CDBP to minimize the total usage time of servers. In contrast to the nonclairvoyant model, in CDBP, the scheduler can perceive the departure time of a VM upon its arrival, which enables more flexible scheduling. Ren et al. \cite{ren2016clairvoyant} proposed the duration-descending first fit (DDFF) algorithm, with an approximation ratio of 5, and the dual-coloring algorithm, with an approximation ratio of 4, as offline solutions. In 2017, Azar and Yossi \cite{azar2017tight} proposed a classify-by-duration FF strategy with a competitive ratio equal to the lower bound on the competitive ratio $\sqrt{\log\mu}$ of any online algorithm. \par
	The DBP model, which enables more efficient and flexible resource scheduling using the additional time dimension, seems promising for application in the private cloud environment, in which workloads are predictable and controllable. However, despite the great efforts researchers have directed toward DBP, their contributions have remained limited to homogeneous environments and $1$-dimensional resources to simplify the work. Moreover, as shown above, most research on DBP has sought to minimize the usage time of all servers. In this paper, a heterogeneous and multidimensional CDBP model and DCBB algorithm are proposed that can handle heterogeneous environments and multidimensional resources in order to minimize the total \#servers required. 
	\section{System Model}
	\label{sec:system model}
	In this section, a novel heterogeneous and multidimensional CDBP model is presented for the VMP problem in private clouds, in which the workloads are predictable and resources are limited. This model aims to better characterize the real-world VMP problem by providing a more detailed description of resources and time factors. In addition, with the additional arrival and duration information provided by the model, the scheduler can perform more efficient scheduling through time multiplexing. To provide a formal representation of the model, the VMs and servers are first defined; then, the time-enhanced constraints and objectives are clarified; and finally, the presented model is analyzed.  \par
	Let $S = (s_1,s_2,...,s_m)$ and $V = (v_1, v_2,..., v_n)$ denote the set of servers and the set of VMs, respectively. The VM $v_j$ in $V$ consists of a triple $(a_j, p_j, \vec{r^v_j})$, where $a_j$ is the arrival time, $p_j$ is the usage duration, and $\vec{r^v_j}$ represents the resources that $v_j$ demands. Thus, $v_j$ represents that a VM demanding $\vec{r^v_j}$ resources arrives at time $a_j$ and remains for a period of $p_j$. It is assumed that $a_j \geq 0$ and $p_j \ge 0$ for all $j$. Regarding servers, each server $s_i$ in $S$ can be simply represented by its resources $\vec{r^s_i}$ since it does not need an additional time dimension. Given that $l$ types of resources in total are considered, the resources associated with the server $s_i$ and the VM $v_j$ can be represented as $\vec{r^s_i}=(r^s_{i1},...,r^s_{il})$ and $\vec{r^v_j}=(r^v_{j1},...,r^v_{jl})$, respectively. Moreover, for each VM $v_j$, there exists at least one server $s_i$ satisfying $r^s_{ik} \ge r^v_{jk},$ $\forall k \in {1,2,...,l}$. Then, the heterogeneous and multidimensional CDBP model for VMP can be presented as follows.
	\begin{align}
		\min \quad & \sum_{i=1}^m\max_{j=1}^{n} x_{ij}  &  &  \label{eq:lp1}\\[1em]
		s.t. \quad & \sum_{j=1}^{n} r_{jk}^{v}x_{ij}u_{jt} \leq{} r_{ik}^s  &  &  \forall i = 1,...,m, \forall k = 1,...,l \label{eq:lp2},\forall t = 0,...,T \\
		& \sum_{i=1}^{m} x_{ij} = 1                                    &  & \forall j=1,...,n \label{eq:lp3} \\
		& x_{ij} \in \{0,1\}                                             &  & \forall i =1,...,m, \forall j=1,...,n \label{eq:lp4} \\
		& u_{jt} \in \{0,1\}                                             &  & \forall j = 1,...,n, \forall t = 0,...,T  \label{eq:lp5}   	  
	\end{align}
	The symbols used in the formulae are explained in \Cref{table:symbolOfModel}.\par
	\begin{table}
		\centering
		\caption{Symbols used in the system model}
		\label{table:symbolOfModel}       
		\begin{tabular}{ll}
			\hline\noalign{\smallskip}
			Symbol	& Definition       \\
			\hline\noalign{\smallskip}
			$l$ & \text{Number of resource dimensions} \\ 
			$m$ & \text{\#servers} \\ 
			$n$ & \text{\#VMs} \\ 
			${{r}^s_{ik}}$ & \text{Amount of the }$k^{th}$\text{ resource possessed by the }$i^{th}$\text{ server} \\
			${{r}^v_{jk}}$ & \text{Amount of the }$k^{th}$\text{ resource demanded by the }$j^{th}$\text{ VM}\\
			$t$ & \text{An instant of time in the experimental period} \\ 
			$T$ & \text{Total time of the experiment} \\ 
			${{u}_{jt}}$ & A variable indicating \text{whether }\text{the execution time of the }$j^{th}$\text{ VM contains the time instant }$t${;} \\    
			& its value is 1\text{ if the execution time of the }$j^{th}$\text{ VM contains }$t$\text{ and is }0\text{ otherwise} \\  
			${{x}_{ij}}$ & A variable indicating \text{whether the }$j^{th}$\text{ VM is assigned to the }$i^{th}$\text{ server;} \\
			& its value is 1\text{ if the }$j^{th}$\text{ VM is assigned to the }$i^{th}$\text{ server and is 0 otherwise}  \\

			\noalign{\smallskip}\hline
		\end{tabular}
	\end{table}
    As \Cref{eq:lp1,eq:lp2,eq:lp3,eq:lp4,eq:lp5} indicate, the proposed model considers the uptime of the VMs, which enables more flexible and efficient resource scheduling. Furthermore, because it considers heterogeneous and multidimensional resources, the model can better reflect real-world scheduling problems. The objective is to minimize the total \#servers required ($i.e.$, minimize the datacenter scale), as shown in~\Cref{eq:lp1}. If required, the objective function can be modified based on the user requirements. The constraints given in \Cref{eq:lp2} indicate that, at any time, each server should have an amount of resources equal to or greater than the total resources demanded by all the VMs that it is accommodating. Specifically, the left-hand side of~\Cref{eq:lp2} represents the total amount of the $k^{th}$ resource demanded from the $i^{th}$ server by all VMs, while the right-hand side represents the total amount of the $k^{th}$ resource possessed by the $i^{th}$ server. Moreover, the constraints in \Cref{eq:lp3} ensure that every VM is scheduled to one and only one server; this indicates that all VMs should be accommodated and that migration is not considered in this model. The constraints in \Cref{eq:lp4} and \Cref{eq:lp5} represent the ranges of $x$ and $u$, respectively.\par
	\begin{figure}
		\centering
		\begin{subfigure}[b][6cm][b]{.5\textwidth}
			\centering
			\includegraphics[width=4.5cm,height=2.5cm]{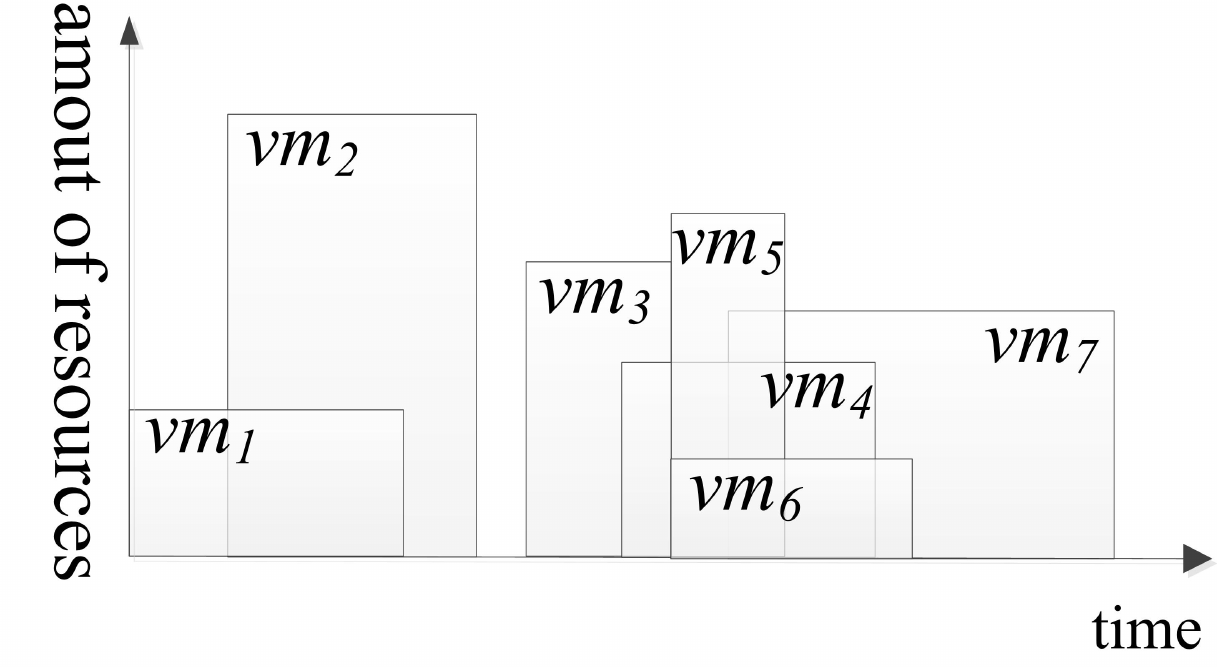}\\
			\caption{VM requests}
			\label{fig:schematic-1}
			\includegraphics[width=4.5cm,height=2.5cm]{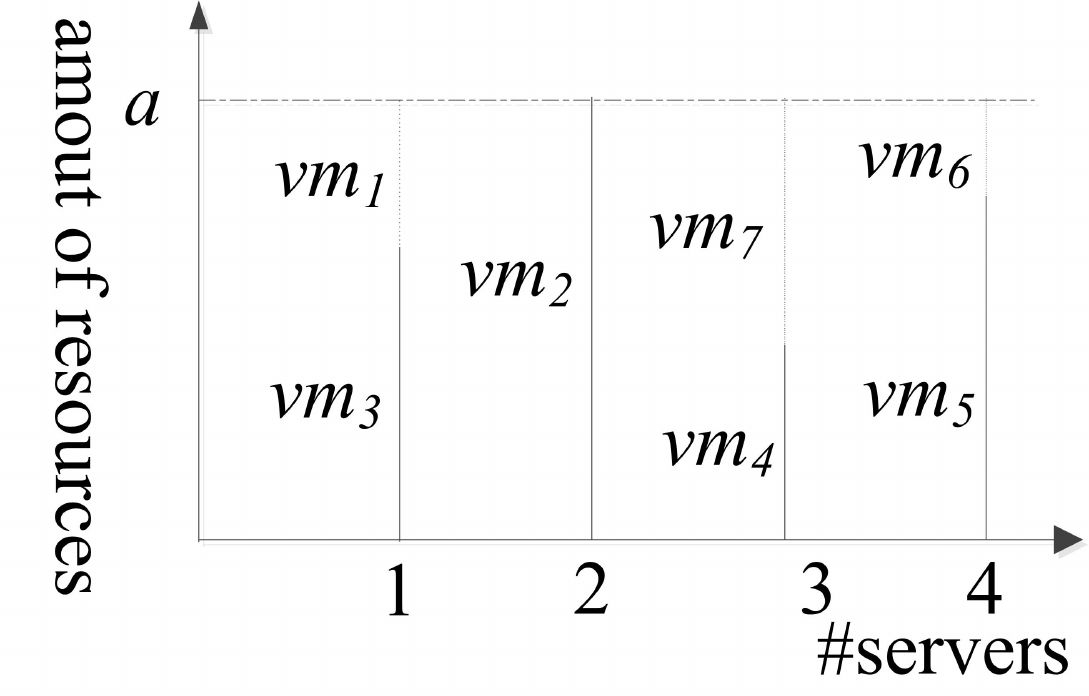}
			\caption{Solution under the classic model}
			\label{fig:schematic-2}
		\end{subfigure}%
		\begin{subfigure}[b][6cm][b]{.5\textwidth}
			\centering
			\includegraphics[width=5.5cm,height=5.4cm]{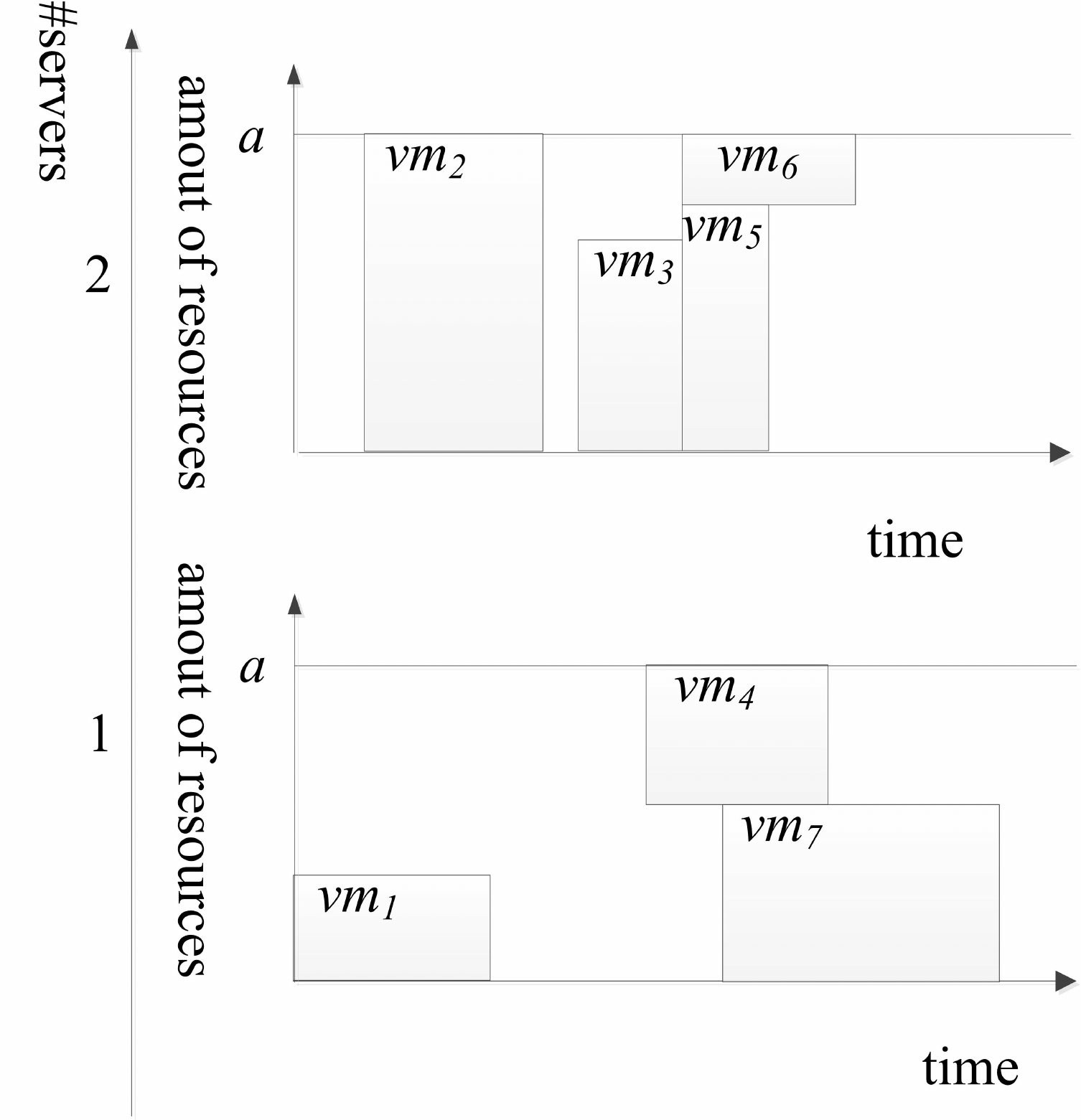}
			\caption{Solution under the CDBP model}
			\label{fig:schematic-3}
		\end{subfigure}
		\caption{Schematic illustrations of the classic and CDBP models}
		\label{fig:schematic}
	\end{figure}
	To provide an intuitive description of the proposed model, the allocation results obtained in the classic setting and in the CDBP setting are presented in \Cref{fig:schematic-2} and \Cref{fig:schematic-3}, respectively, given the VM requests in \Cref{fig:schematic-1}. In \Cref{fig:schematic}, rectangles are used to represent the VMs, with the height representing the amount of resources, the width representing the duration, and the horizontal position representing the arrival time. To clearly visualize the allocation results, $a$ is used to denote the amount of resources possessed by each server, as shown in \Cref{fig:schematic-2,fig:schematic-3}. Note that resources can have multiple dimensions (e.g., CPU, memory, and SSD) and that the servers can be heterogeneous in the present model, although only one dimension is used to represent the resources in \Cref{fig:schematic} to make the figure simpler and more concise. As shown in \Cref{fig:schematic-2}, under the classic model, the scheduler must allocate new resources for each VM because of the lack of time information. By contrast, as \Cref{fig:schematic-3} indicates, the CDBP model allows different VMs, for example, $vm_2$ and $vm_3$, to occupy the same resources in different periods to reduce the required \#servers. The resultant \#servers required to accept all requests is 4 in the classic setting and 2 in the CDBP setting. Thus, it can be concluded that the CDBP model can decrease the total \#servers required to accept all requests by means of time multiplexing. \par
	There are several special forms available for requests in the proposed model, as follows:\par
	\begin{enumerate}
		\item arrival time $= 0$: the request should be executed immediately, without a reservation.
		\item execution time $= \infty$: the duration of the request is undetermined, and the demanded resources should be reserved until it terminates.
		\item arrival time $= 0$ and execution time $= \infty$: the request will be degraded into a classic request since it does not provide any valid time information.  
	\end{enumerate}
	The use of these three special forms in the CDBP model is not recommended because they will reduce the degree of time multiplexing. In addition, from the third form, we find that the request under the CDBP model is more general than the original one, which indicates a wider range of application. \par
	As shown in~\Cref{eq:lp1}, the aim is to minimize the \#servers, thus reducing the upfront cost, while accepting all VMs. However, other objectives (e.g., load balancing and cost minimization) can also be adopted. \Cref{sec:scheduling algorithms} will present the algorithms proposed to address the VMP problem with an additional time dimension derived from the heterogeneous and multidimensional CDBP model introduced in this section. \par
	
	\section{Scheduling Algorithms}
	\label{sec:scheduling algorithms}
	To handle the problem derived from the model proposed in \Cref{sec:system model}, this section proposes DCBB algorithm and improves the state-of-the-art algorithm OEMACS and the classic algorithm DDFF. First, the motivations for and requirements of the algorithms for use within our proposed heterogeneous and multidimensional CDBP model are introduced. Then, we present the improved versions of DDFF and OEMACS, namely, DDFF$^{+}$ and OEMACS$^{+}$, that have been adapted for use within the proposed model. Finally, the DCBB procedure and its theoretical analysis are shown.\par
	\subsection{Motivations and Requirements}
	\label{subsed: motivations and requirements}
	Although various algorithms for the classic VMP problem have been proposed, as described in \Cref{sec:related works}, there is a need to design novel algorithms or improve existing algorithms to handle the additional time dimension in the CDBP model. Although the time dimension can be addressed in a time-sequential fashion using the classic online algorithms, their accuracies need to be improved, as shown in \Cref{sec:implementation and experiments}. Therefore, the DCBB algorithm is proposed to effectively and efficiently handle the VMP problem with the additional time dimension. In addition, DDFF and OEMACS are modified for use within the proposed model to observe their performance with additional time information and to compare them with the proposed DCBB algorithm.\par
	A practical VMP algorithm under the proposed model should satisfy the following requirements.
	\begin{enumerate}[start=1,label={\bfseries R\arabic*.},wide = 0pt, leftmargin = 2.2em]
		\item Multidimensional resources \cite{luo2014hybrid}: the algorithm should be able to handle resources with multiple dimensions, although some algorithms will become slower as the number of resource dimensions increases.
		\item Heterogeneity \cite{luo2014hybrid}: the algorithm should consider heterogeneous environments, in which servers have different amounts of resources, since such environments are a common feature of cloud datacenters.
		\item Time dimension \cite{gu2017reservation}: the algorithm should be able to handle the time dimension, which is the key feature of the CDBP model. Time multiplexing should be enabled to increase the resource utilization ratio and thus reduce the \#servers required to accommodate VMs.
		\item Availability \cite{toosi2015revenue}: the algorithm should ensure that resources are reserved in the appointed period for every accepted VM request, which is the basic requirement of the advance reservation mechanism.
	\end{enumerate} \par
	The following subsections present the improved algorithms DDFF$^{+}$ and OEMACS$^{+}$ and the proposed algorithm DCBB.
	
	\subsection{Duration-Descending First Fit with a Shuffling Process}
	In the literature, Runtian et al. \cite{ren2016clairvoyant} proposed the DDFF and dual-coloring algorithms to minimize the total usage time of servers under the CDBP model, with approximation ratios of 5 and 4, respectively. In this subsection, the aim is to modify the DDFF algorithm to minimize the \#servers in a heterogeneous environment with multidimensional resources. Although the dual-coloring algorithm has a lower approximation ratio, the difficulty of enhancing it to consider multidimensional resources impedes its application to our proposed model. DDFF first sorts the VMs in descending order by duration and then allocates the VMs in an FF manner. It can be easily adapted to a scenario with multidimensional resources because of its natural features. However, FF-based algorithms such as DDFF, which were originally designed for scenarios with 1 resource dimension, generally have difficulty sorting servers by their resources in a multidimensional-resource scenario since no inclusion relationship exists in this case. Inspired by \cite{liu2016energy}, DDFF$^{+}$ has been designed as an improved version of DDFF with an additional shuffling process to improve the scheduling performance. In addition, FF$^{+}$ has been designed using a similar improvement technique, although the detailed procedure is not presented here to avoid repetition. The pseudocode for DDFF$^{+}$ with a shuffling process is shown in \Cref{alg:ddff}.
	\begin{algorithm}
		
		\KwIn{a set of $n$ VMs, \VmSet; a set of $m$ servers, \ServerSet}
		\KwOut{an allocation of the VM requests in \VmSet to the servers in \ServerSet}

		\VmSet $\leftarrow \descendingSortByDuration(\VmSet)$ \hfill $O(nlogn)$\\
		\ServerSet $\leftarrow \shuffle(\ServerSet)$ \hfill  $O(m)$ \\
		$allocation \leftarrow \emptyset$   \hfill  $O(1)$\\
		\ForEach{ $vm$ in \VmSet}
		{\ForEach{ $server$ in \ServerSet}  
			{\If{$server.\canAccommodate(vm)$}
				{$allocation.\add(\{server,vm\})$ \hfill  $O(1)$}   
		}}
		
		\caption{DDFF$^{+}$}
		\label{alg:ddff}
		\Return{$allocation$}
	\end{algorithm} \par
	In Line 1 of \Cref{alg:ddff}, the VMs are sorted in descending order by duration, with the aim of improving the accuracy of the algorithm. In Line 2, the servers are shuffled to improve the scheduling performance in multidimensional-resource scenarios (\textbf{R1}) and heterogeneous environments (\textbf{R2}). The effectiveness of the sorting and shuffling processes has been demonstrated through experiments (\Cref{sec:implementation and experiments}). When the algorithm judges whether a server can accommodate a VM, as shown in Line 6, every resource dimension (\textbf{R1}) and the time dimension (\textbf{R3}) are simultaneously considered. Once a server that can accommodate the VM is found, the corresponding demanded resources will be reserved for the VM (\textbf{R4}), as shown in Line 7.

	Now that the details of DDFF$^{+}$ have been introduced, it can be proven that it is a polynomial-time algorithm with a time complexity of $O(nlogn)+ O(mn)$, where $m$ and $n$ are the \#servers and the number of VMs (\#VMs), respectively. Although this algorithm has a fast processing speed, its outcome is generally far from the optimum. To compensate for this degradation in accuracy, two more algorithms are presented below that are designed to achieve more accurate scheduling solutions.

	\subsection{Time-aware and Multidimensional OEMACS}
	As mentioned in \Cref{sec:related works}, the OEMACS algorithm \cite{liu2016energy} performs better than the conventional heuristics and other evolutionary algorithms when addressing the classic VMP problem in heterogeneous environments (\textbf{R2}). The local search techniques, namely, order exchange and migration, proposed by the authors for the ant colony system contribute to the impressive performance of OEMACS. Inspired by this algorithm, OEMACS$^{+}$ has been designed to consider the additional time dimension (\textbf{R3}) and the possibility of advance reservation (\textbf{R4}), allowing this algorithm to be applied to our proposed heterogeneous and multidimensional CDBP model. Furthermore, OEMACS$^{+}$ is also enhanced in terms of its consideration of multidimensional resources (\textbf{R1}), whereas OEMACS was originally designed for only two resource dimensions. To achieve the above goals, the majority of OEMACS was preserved, with the main modifications being concentrated in only a few formulae. The modified formulae are shown in~\Cref{eq:oemac1,eq:oemac2,eq:oemac3,eq:oemac4,eq:oemac5}, and the notations used are explained in \Cref{table:symboOEMACS}.
	\begin{table}
		\centering
		\caption{List of notations used to describe OEMACS$^{+}$}
		\label{table:symboOEMACS}       
		\begin{tabular}{ll}
			\hline\noalign{\smallskip}
			Notation	& Definition       														\\
			\noalign{\smallskip}\hline\noalign{\smallskip}											
			$D$		& The set of resource dimensions										\\
			$M_t$   & The set of servers utilized at time $t$	\\
			$P_{id}$	& The total amount of the $d^{th}$ resource possessed by the $i^{th}$ server                   \\
			$R_{itd}$ & The remaining amount of the $d^{th}$ resource of the $i^{th}$ server available at time $t$			\\
			$S$  & A solution to the VMP problem \\
			$S_b$ & The best solution in the current iteration \\
			$T(i)$  & The execution period of the $i^{th}$ VM							\\
			$v_{id}$	&  The amount of the $d^{th}$ resource demanded by the $i^{th}$ VM            \\
			$y_i$ & A variable indicating whether the $i^{th}$ server is used ($y_{i} = 1$) or not ($y_{i} = 0$) 	\\
			\noalign{\smallskip}\hline
		\end{tabular}
	\end{table}
	
	\begin{enumerate}
		\item The expression for identifying feasible servers is modified to ensure that the total amount of resources demanded by all VMs is not larger than the capacity of the target server in every resource dimension at any time, as shown in \Cref{eq:oemac1}.
		\begin{equation}
			\label{eq:oemac1}
			I{{'}_{j}}=\left\{ \left. i\left| {{R}_{itd}} \geq {{v}_{jd}}, 1\le i\le {{M}_{t}},\forall t\in T(j),\forall d\in D \right| \right\} \right.
		\end{equation}
		\item The expressions for the heuristic information (\Cref{eq:oemac2}), overload ratio (\Cref{eq:oemac3}), heuristic objective (\Cref{eq:oemac4}) and global pheromone updating operation (\Cref{eq:oemac5}) are improved by calculating the average remaining resource ratio during a time period considering all resource dimensions, as shown below.
		\begin{equation}
			\label{eq:oemac2}
			\eta '(i,j)=\frac{1-\frac{\sum\limits_{{{d}_{1}}\in D}{\sum\limits_{\begin{smallmatrix} 
							{{d}_{2}}\in D \\ 
							{{d}_{1}}\ne {{d}_{2}} 
							\end{smallmatrix}}{\left| \frac{1}{{{P}_{id{_{1}}}}}\left( \frac{\sum\limits_{t\in T(j)}{{{R}_{i}}_{t{{d}_{1}}}}}{|T(j)|}-{{v}_{j{{d}_{1}}}} \right)-\frac{1}{{{P}_{id{_{2}}}}}\left( \frac{\sum\limits_{t\in T(j)}{{{R}_{i}}_{t{{d}_{2}}}}}{|T(j)|}-{{v}_{j{{d}_{2}}}} \right) \right|}}}{\text{C}_{\left| D \right|}^{2}}}
			{\frac{\sum\limits_{d\in D}{\left| \frac{1}{{{P}_{id}}}\left( \frac{\sum\limits_{t\in T(j)}{{{R}_{i}}_{td}}}{|T(j)|}-{{v}_{jd}} \right) \right|}}{\left| D \right|}\text{+}1.0}
		\end{equation} \\
		
		\begin{equation}
			\label{eq:oemac3}
			over'(i)=\sum\limits_{d\in D}{\left| \frac{1}{{{P}_{id}}}\left( \frac{\sum\limits_{t\in T(j)}{{{R}_{i}}_{td}}}{|T(j)|}-v_{jd} \right) \right|}
		\end{equation} \\ 
		
		\begin{equation}
			\label{eq:oemac4}
			f{{'}_{2}}(S)=\sum\limits_{i\in {{M}_{t}}}^{{}}{\left( \sum\limits_{d\in D}{\left( \left| \frac{1}{{{P}_{id}}}\left( \frac{\sum\limits_{t\in T(j)}{{{R}_{i}}_{td}}}{|T(j)|} \right) \right| \right)}\cdot {{y}_{i}} \right)}
		\end{equation} \\
		
		\begin{equation}
			\label{eq:oemac5}
			\Delta \tau {{'}_{i}}=\frac{1}{{{f}_{1}}({{S}^{b}})}+\frac{1}{\sum\limits_{d\in D}{\frac{1}{{{P}_{id}}}\left( \frac{\sum\limits_{t\in T(j)}{{{R}_{i}}_{td}}}{|T(j)|} \right)}+1}
		\end{equation}
	\end{enumerate}
    
	Through the modifications shown in \Cref{eq:oemac1,eq:oemac2,eq:oemac3,eq:oemac4,eq:oemac5}, OEMACS$^{+}$ can be applied in our proposed heterogeneous and multidimensional CDBP model for enhancing the classic ant colony system with order exchange and migration as local search techniques. A brief explanation of the modified formulae is presented here, and the reader can refer to \cite{liu2016energy} for more details. First, \Cref{eq:oemac1} is used to select the feasible servers that have sufficient resources for the VM. Then, the heuristic information that is used to guide the greedy search is calculated using \Cref{eq:oemac2}. The overload ratio calculated in \Cref{eq:oemac3} represents the difference between the required and total resources after a VM has been accommodated. Then, the heuristic objective expressed in \Cref{eq:oemac4} is used to evaluate the solution. Finally, \Cref{eq:oemac5} is used to calculate the global pheromone, which guides the construction of better solutions.
	
	\subsection{Branch-and-Bound Algorithm with a Divide-and-Conquer Strategy}
	Although the classic BB algorithm can yield the optimal solution when applied to the linear programming model introduced in \Cref{sec:system model}, its execution time is beyond tolerance, especially in large-scale cases. In this subsection, we propose the DCBB algorithm, which is based on the BB algorithm and includes an additional divide-and-conquer process to improve the scheduling efficiency with little to no degradation in accuracy. To achieve this goal, DCBB first clusters the VMs into several VM sets, then works out the scheduling solutions for each set, and finally merges these subsolutions into the final one.

	The main DCBB procedure is as follows.
	\begin{enumerate}
		\item  Cluster the VMs into a set of clustered sets (\CSet) that satisfy the following conditions:
		\begin{enumerate}
			\item The execution times of every two VMs in the same clustered set (\C) overlap with each other.
			\item The execution times of any two VMs in different \Cs do not overlap.
		\end{enumerate}
		Then, place the VMs that cannot be clustered into any \C into the left set (\L).
		\item Schedule the VMs in different \Cs separately with BB.
		\item Schedule the VMs in \L with DDFF$^{+}$.
		\item Combine the solutions obtained in Steps 2 and 3.
	\end{enumerate}
	
	As shown above, rather than scheduling the VMs as a unit, the DCBB algorithm employs a divide-and-conquer strategy to handle the problem more efficiently. Step 1 enables time multiplexing (\textbf{R3}) by clustering the VMs into several VM sets based on time information. Then, these sets of VMs are scheduled separately in Steps 2 and 3, and finally, the subsolutions are merged without resource contention (\textbf{R4}) in Step 4. Although the original BB algorithm can be used to solve the VMP problem with multidimensional resources (\textbf{R1}) in heterogeneous environments (\textbf{R2}), it is computationally prohibitive. Through the divide-and-conquer process, DCBB achieves significantly improved efficiency  at the cost of a minor degradation in precision, as demonstrated in \Cref{sec:implementation and experiments}. \par
	\begin{algorithm}
		\KwIn{a set of VM requests, \VmSet; a set of servers, \ServerSet}
		\KwOut{an allocation of the VM requests in \VmSet to the servers in \ServerSet}
		\tcp{Cluster the VMs into the SCS and LS. A typical clustering algorithm is presented in Algorithm 3} (\cset, \l) $\leftarrow \cluster(\VmSet)$ 
		
		\ForEach{ \c in \cset}
		{$allocation \leftarrow allocation \cup \BB(\c,\ServerSet)$}
		
		$allocation \leftarrow allocation \cup \DDFF^{+}(\l,\ServerSet)$
		
		\caption{DCBB}
		\label{alg:dcbb}
		\Return{$allocation$}
	\end{algorithm}
	\begin{algorithm}[H]
		
		\SetKwFunction{findTimeMostVmsOccupied}{findTimeMostVmsOccupied}
		
		\KwIn{a set of VM requests, \VmSet}
		\KwOut{\CSet \cset; \L \l}

		\cset $\leftarrow \emptyset$, \l $\LA \emptyset$
		
		\While{$|\VmSet| \neq 0$}
		{
			\tcp{determine the time $t$ when the most VMs will be running} $t \LA \findTimeMostVmsContaining(\VmSet)$ \\
			\tcp{put all remaining VMs whose execution times contain $t$ into \c}    $\c \LA \getVMSetContainingTime(t, \VmSet)$ \\	
			\cset \LA \cset $\cup$ \c
			
			\VmSet \LA \VmSet \textbackslash \c\\
			
			\ForEach{ $vm_1$ in \VmSet}    
			{\ForEach{ $vm_2$ in \c}
				{\tcp{$\p(v)$ represents the execution time of VM $v$} \If{$\p(vm_1) \cap \p(vm_2) \neq \emptyset $ }
					{\l \LA $\l \cup \{vm_1\}$ \\
						\VmSet \LA \VmSet \textbackslash \{$vm_1$\}}
			}}
			
		}
		\caption{Most-Greedy Clustering Algorithm}
		\label{alg:clusterMost} 
		\Return{\cset, \l}
	\end{algorithm}
	The pseudocode for DCBB is shown in \Cref{alg:dcbb}. In this algorithm, Line 1 corresponds to the clustering process, while Lines 2 to 5 represent the processes of solving and merging. Various clustering algorithms have been designed such that the \Cs will satisfy the two conditions described in the DCBB procedure. However, in the current work, the different clustering algorithms perform similarly in both the theoretical analysis presented later in this subsection and the experiments we conducted. Thus, only one clustering algorithm, namely, most-greedy clustering (MGC), is presented here to demonstrate the process. The main strategy of MGC is to iteratively find the CS of the maximum size. The pseudocode presented in \Cref{alg:clusterMost} shows that MGC mainly involves the following steps:
	\begin{enumerate}
		\item Build a \C with the largest VM set in which the execution times of every two VMs overlap with each other. (Lines 3-5)
		\item Place all remaining VMs whose execution times overlap with that of any VM in a \C into the \L. (Lines 7-11)
		\item Repeat Steps 1 and 2 until all VM have been clustered into a set. (Lines 2-11)
	\end{enumerate}
	Steps 1 and 2 guarantee that the execution times of any two VMs in a \C will overlap with each other and that no two VMs in different \Cs will overlap, while Step 3 ensures that all VMs will be clustered into VM sets, based on which they will be scheduled later.
	
	\begin{figure}[hbt]
		\centering
		\includegraphics[width=10cm]{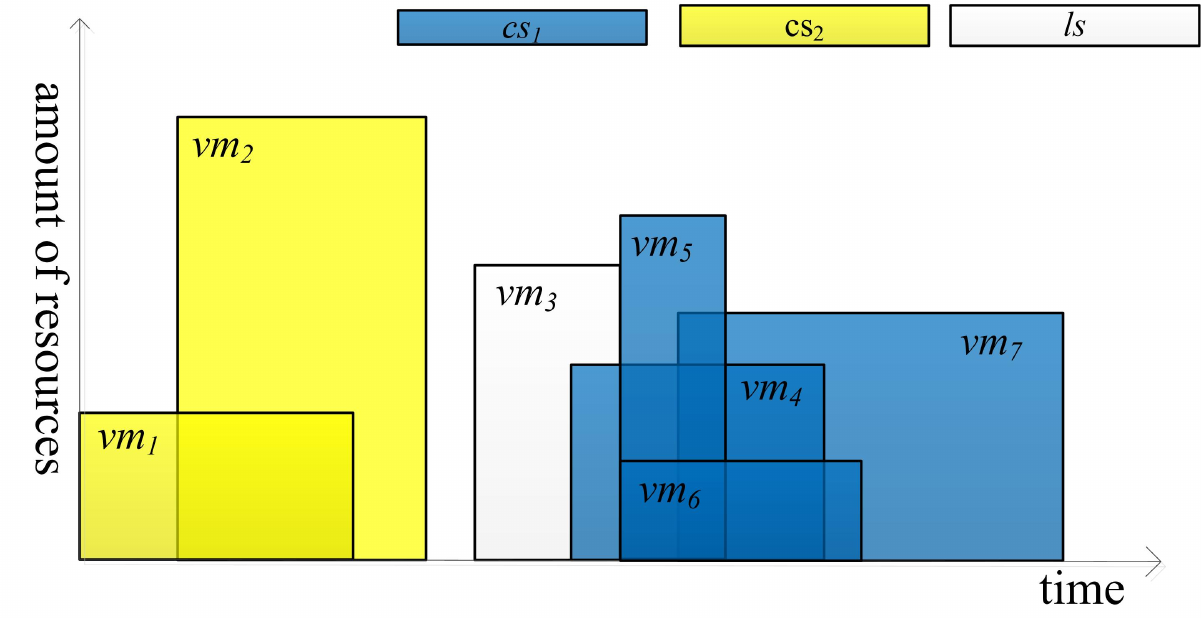}
		\caption{Schematic illustration of MGC}
		\label{fig:mgc}
	\end{figure}
	
	\Cref{fig:mgc} illustrates the clustering results obtained by MGC for 7 VMs. It shows that $vm_{4} – vm_{7}$ constitute the maximal set of overlapping VMs, the size of which is 4. Thus, MGC first puts  $vm_{4} – vm_{7}$ into $\c_1$. Then, $vm_{3}$ is clustered into the $ls$ since its execution time overlaps with that of $vm_{4}$. Finally, MGC clusters $vm_{1}$ and $vm_{2}$ into $\c_2$ since their execution times overlap.
	
	Although much more theoretical research on DCBB and suitable clustering algorithms needs to be conducted, some of the features of the clustering algorithm and its influence on DCBB have already been discovered, as described by the following lemmas and theorems, in which the abbreviations listed in \Cref{table:abbrDCBB} are used. 
	
	\begin{table}
		\centering
		\caption{List of abbreviations used to describe DCBB}
		\label{table:abbrDCBB}       
		\begin{tabular}{ll}
			\hline\noalign{\smallskip}
			Abbreviation	& Definition       \\
			\noalign{\smallskip}\hline\noalign{\smallskip}											
			A$(V)$	&  \#servers required by algorithm A for VM set $V$           \\
			\l       &  The LS generated by a clustering algorithm \\
			OPT    & The optimal algorithm                     \\
			p$(v)$	&  The run time of VM $v$										\\
			\cset & The \CSet generated by a clustering algorithm \\

			\noalign{\smallskip}\hline
		\end{tabular}
	\end{table}
	
	\begin{lemma}
		If one clustering algorithm yields $|\l|$ = 0, then all clustering algorithms will yield $|\l|$ = 0.
		\label{lemma:all 0}
	\end{lemma}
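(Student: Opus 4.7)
The plan is to translate the clustering conditions into graph-theoretic language and use a structural characterization of when $|\ell s|=0$ is achievable. Define the overlap graph $G=(V,E)$, where $V$ is the set of VM requests and $\{v,w\}\in E$ iff the execution intervals $p(v)$ and $p(w)$ intersect. A valid clustering $(\text{scs},\ell s)$ then corresponds to: each \C{} in \cset{} is a clique of $G$, and there are no edges between different \Cs; a VM is placed in \L{} only if at the moment it is processed it cannot be added to any existing \C{} nor opened as a new \C{} without violating condition (a) or (b).

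First, I would show that $|\ell s_A|=0$ for some clustering algorithm $A$ forces the overlap relation on $V$ to be an equivalence relation, hence $G$ to be a disjoint union of cliques. Reflexivity and symmetry are immediate. For transitivity, suppose $v\sim w$ and $w\sim u$: in $A$'s output, $v$ and $w$ share a \C{} (otherwise (b) is violated by the edge $\{v,w\}$), and likewise $w$ and $u$ share a \C{}; since $w$ lies in only one \C{}, we get $v,u$ in the same \C{}, and (a) yields $v\sim u$. Thus the \Cs of $A$ are precisely the connected components of $G$, each of which is a clique.

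Next, I would take an arbitrary clustering algorithm $B$ and argue that at every step no VM can be forced into $\ell s_B$. Consider any VM $v$, and let $C_i$ be the clique of $G$ containing $v$. At the moment $B$ processes $v$, inspect the partial clustering $B$ has built. If no VM of $C_i\setminus\{v\}$ has been placed in any \C{} of $B$, then $v$ can start the new \C{} $\{v\}$: condition (a) is trivial, and (b) holds because $v$ has no neighbours outside $C_i$. Otherwise, some existing \C{} $C'$ of $B$ already contains a VM of $C_i$; because $C'$ is a clique with no cross-edges to other \Cs and $G$ is a disjoint union of cliques, $C'\subseteq C_i$. Then $v$ is adjacent to all of $C'$ (preserving (a)) and still has no neighbours outside $C_i$ (preserving (b)), so $v$ can be added to $C'$. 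Either way $v$ is placeable, contradicting $v\in \ell s_B$.

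The main obstacle, I expect, is pinning down the ``admissibility'' of a clustering algorithm: the lemma is only meaningful under the reading that an algorithm consigns a VM to \L{} only when it genuinely has no legal \C{} to join at that moment, and the argument above is where the equivalence-relation structure obtained from $|\ell s_A|=0$ is essential, since without it the overlap relation need not be transitive and the placement in case $C'\subseteq C_i$ would not go through. Once this admissibility is made explicit and the equivalence-class structure is established, the remainder is a straightforward case analysis.
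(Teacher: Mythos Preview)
Your proposal is correct and rests on the same key insight as the paper's proof: if some clustering algorithm achieves $|\l|=0$, then the overlap relation is transitive, so the overlap graph is a disjoint union of cliques. The paper argues this implicitly inside a short contradiction argument: assuming a second algorithm $a_2$ places some $v_i$ in its \L, it locates $v_{k_1},v_{k_2}$ in a single \C\ of $a_2$ with $v_i\sim v_{k_2}$, $v_{k_1}\sim v_{k_2}$, but $v_i\not\sim v_{k_1}$; then, using the \Cs of $a_1$ (where $|\l|=0$), the first two relations force $v_i,v_{k_1},v_{k_2}$ into a common \C\ of $a_1$, contradicting $v_i\not\sim v_{k_1}$. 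This is exactly your transitivity step, just run backwards.

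Where you differ is in the second half: the paper stops at the contradiction, while you extract the equivalence-class structure explicitly and then give a forward, step-by-step argument that any admissible algorithm $B$ can always place the current VM. Your version is more informative---it makes clear that the \Cs are forced to be the connected components of $G$ (which is precisely the content of the paper's later Lemma~\ref{lemma:same results})---but it also leans more heavily on a sequential processing model that the paper never commits to. You correctly flag this as the main obstacle; the paper's proof sidesteps it by assuming, without comment, that $v_i\in\l'$ entails the existence of the obstructing pair $v_{k_1},v_{k_2}$. Both arguments therefore rely on the same unstated admissibility condition (``a VM enters \L\ only when no legal \C\ is available''), and neither is more rigorous than the other on this point.
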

	\begin{proof}
		Assume that the lemma is invalid. Suppose that clustering algorithm $a_1$ yields \CSet $\cset_1=\{\c_{11}, \c_{12},...,\c_{1n}\}$ and that clustering algorithm $a_2$ yields \CSet $\cset_2= \{\c_{21}, \c_{22},...,\c_{2n}\}$ and \L $ls'$. Then, for any $v_i\in ls' $, there must exist two VMs $v_{k_1}$ and $v_{k_2}$ belonging to the same \C $\c_{2j}$ in $\cset_2$ that satisfy \Crefrange{eq:theory:1}{eq:theory:3}:
		\begin{equation}
		\label{eq:theory:1}
		\p(v_i) \cap \p(v_{k_1})= \emptyset,
		\end{equation}
		\begin{equation}
		\label{eq:theory:2}
		\p(v_i) \cap \p(v_{k_2})\ne \emptyset,\\
		\end{equation}
		\begin{equation}
		\label{eq:theory:3}
		\p(v_{k_1})\cap \p(v_{k_2})\ne \emptyset.\\
		\end{equation}
		From \Cref{eq:theory:2,eq:theory:3}, it can be inferred that $v_i$ and $v_{k_1}$ belong to the same $\c_{1k}$ since both $v_i$ and $v_{k_1}$ should belong to the same CS as $v_{k_2}$ does in $\cset_1$. However, according to \Cref{eq:theory:1}, $v_i$ and $v_{k_1}$ cannot be clustered into the same \C. Thus, the lemma is validated because the assumption fails.
	\end{proof}
	
	\begin{lemma}
		\label{lemma: 0 to optimal}
		If the clustering algorithm used in DCBB yields $|ls|$ = 0, then DCBB will produce the optimal solution.
	\end{lemma}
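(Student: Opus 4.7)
My plan is to establish $DCBB(V)=OPT(V)$ under the hypothesis $|ls|=0$ by proving both inequalities.

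First I would exploit the structure forced by $|ls|=0$. Every VM then belongs to some cluster $cs_i\in scs$, and the execution intervals inside $cs_i$ overlap pairwise. Intervals on the real line satisfy the Helly property in one dimension (pairwise intersection implies a common intersection), so there exists a time $t_i$ at which every VM of $cs_i$ is simultaneously active. Hence the CDBP subproblem on $cs_i$ collapses to a classical heterogeneous, multidimensional bin-packing instance at $t_i$, and BB, being an exact enumeration, returns an allocation using exactly $OPT(cs_i)$ servers.

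Next I would observe that distinct clusters occupy disjoint time periods --- any VM overlapping a VM of another cluster would have been pushed into $ls$, contradicting the hypothesis. Consequently at any instant at most one cluster is active, and the individual BB allocations produced in the loop of \Cref{alg:dcbb} combine into a globally feasible schedule of $V$ with no resource conflict.

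For the inequalities, $DCBB(V)\ge OPT(V)$ is immediate: DCBB returns a feasible allocation and $OPT$ is by definition the minimum server count among feasible allocations. For the reverse direction, I would pick an optimal schedule $\sigma^*$ attaining $OPT(V)$ on a server set $S^*$ with $|S^*|=OPT(V)$. Restricting $\sigma^*$ to $cs_i$ is feasible, which both gives $OPT(cs_i)\le OPT(V)$ and exhibits $S^*$ as a universal host set for all clusters. Because BB is exact, for each $cs_i$ there is an optimal BB run whose chosen server set lies inside $S^*$, and unioning those runs yields an overall DCBB allocation contained in $S^*$; therefore $DCBB(V)\le |S^*|=OPT(V)$.

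The step I expect to be the main obstacle is this last one. In a heterogeneous environment, two optimal subsets returned by independent BB invocations on different clusters need not be comparable, so the naive union of allocations could in principle exceed $OPT(V)$. I would resolve this by appealing to the universal hosting set $S^*$ as a witness that a coordinated realization of DCBB attaining $OPT(V)$ exists, and by noting that a simple tie-breaking rule inside BB --- preferring servers already chosen in earlier iterations whenever doing so is still optimal for the current cluster --- realizes this bound deterministically.
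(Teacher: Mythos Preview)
Your proof is considerably more careful than the paper's, which simply asserts $\mathrm{DCBB}(vmSet)=\max_{cs_i\in scs}\mathrm{BB}(cs_i)$ from the time-independence of the clusters and then concludes $\mathrm{DCBB}(vmSet)\le\mathrm{BB}(vmSet)$; the paper never discusses how server \emph{identities} are reconciled across the independent BB calls. You are right to flag heterogeneity as the real obstacle.

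Unfortunately your proposed resolution does not close the gap. The assertion that ``for each $cs_i$ there is an optimal BB run whose chosen server set lies inside $S^*$'' can fail: an optimal packing of a single cluster may require a server type that no globally optimal server set contains. Concretely, let $cs_1$ consist of two small VMs that fit together only on the large server $C$ (so $\mathrm{OPT}(cs_1)=1$, attained uniquely via $C$), and let $cs_2$ consist of one VM fitting only on $A$ and one fitting only on $B$ (so $\mathrm{OPT}(cs_2)=2$). Then $\mathrm{OPT}(V)=2$ with $S^*=\{A,B\}$ (place the two VMs of $cs_1$ separately on $A$ and $B$), yet the only optimal allocation of $cs_1$ uses $C\notin S^*$. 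Any exact BB call on $cs_1$ must return $\{C\}$, so DCBB as specified in \Cref{alg:dcbb} outputs three servers regardless of tie-breaking; your ``prefer already-chosen servers'' rule cannot help because no one-server allocation of $cs_1$ lives inside $\{A,B\}$. Thus both your argument and the paper's implicitly rely on an extra assumption---effectively homogeneity, or a merge step permitted to sacrifice per-cluster optimality in exchange for global reuse---and without making that explicit the lemma cannot be established in the stated heterogeneous model.
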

	\begin{proof}
		According to the independence of the execution times of the VMs in different \Cs and given $|\l|=0$,
		\begin{equation}
		\text{DCBB}(vmSet) = \text{DCBB} (\underset{\c_i \in \cset}{\bigcup}\c_i) = \underset{\c_i \in \cset}{max}\text{BB} (\c_i).
		\end{equation}
		Since
		\begin{equation}
		\text{BB}(\c_i) \leq \text{BB}(vmSet)\ \text{for}\ \forall \c_i \in \cset,
		\end{equation}
		the following holds:
		\begin{equation}
		\text{DCBB}(vmSet) \leq \text{BB}(vmSet).
		\end{equation}
		Since the BB algorithm is ideally accurate, it can be inferred that DCBB yields the optimal solution.
	\end{proof}

	\begin{theorem}
		\label{theorem: optima}
		If one clustering algorithm yields $|\l|=0$, then $DCBB$ integrated with any clustering algorithm will yield the optimal solution.
	\end{theorem}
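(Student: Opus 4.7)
The plan is to derive the theorem as a direct composition of the two preceding lemmas, so the proof will essentially be a one-step chain with no new combinatorial content. First I would restate the hypothesis: we are told that some particular clustering algorithm $a^{\star}$, when run on the VM set, outputs a leftover set \l{} with $|\l|=0$. The goal is to show that DCBB produces the optimal solution regardless of which clustering algorithm is actually plugged into it.

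Next I would apply \Cref{lemma:all 0}. Its conclusion is that once any single clustering algorithm achieves an empty leftover set on the input instance, every clustering algorithm must also achieve an empty leftover set on that same instance. This is a per-instance statement, so it transfers the hypothesis from the specific algorithm $a^{\star}$ to an arbitrary clustering algorithm $a$ that one might have chosen inside DCBB.

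Then I would invoke \Cref{lemma: 0 to optimal}. That lemma says precisely that whenever the clustering algorithm embedded in DCBB yields $|\l|=0$, the final DCBB output is optimal. Combining this with the previous step gives optimality of DCBB for every choice of clustering algorithm, which is exactly the theorem.

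I do not expect a genuine obstacle here; the theorem is essentially the concatenation of the two lemmas and the only care needed is to make explicit that both lemmas are statements about a fixed input instance rather than worst-case guarantees, so the hypothesis ``one algorithm gives $|\l|=0$'' legitimately triggers Lemma~\ref{lemma:all 0} on that very instance and hence Lemma~\ref{lemma: 0 to optimal} for every choice of clustering subroutine. The write-up can therefore be kept to two or three lines.
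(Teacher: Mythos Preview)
Your proposal is correct and matches the paper's own proof, which is a one-line deduction from Lemma~\ref{lemma:all 0} and Lemma~\ref{lemma: 0 to optimal}. Your only addition is making explicit the per-instance interpretation of the lemmas, which is a reasonable clarification but not required.
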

	
	\begin{proof}
		Theorem \ref{theorem: optima} can be deduced based on Lemma~\ref{lemma:all 0} and Lemma~\ref{lemma: 0 to optimal}.
	\end{proof}

	\begin{lemma}
		If clustering algorithm $a$ yields $|\l|$ = 0 and the VMs $v_a$ and $v_b$ are clustered into the same \C $\c_{i}$ by algorithm $a$, then $v_a$ and $v_b$ will be clustered into the same \C by all clustering algorithms.
		\label{lemma:same C}
	\end{lemma}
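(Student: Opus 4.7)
The plan is to reduce this lemma to a short argument that combines Lemma~\ref{lemma:all 0} with the two defining properties (a) and (b) of a clustering, used contrapositively. My starting observation is that the hypothesis ``algorithm $a$ yields $|\l|=0$'' is exactly the trigger for Lemma~\ref{lemma:all 0}, which guarantees that \emph{every} clustering algorithm produces an empty leftover set. In other words, under our hypothesis every clustering algorithm places every VM (in particular $v_a$ and $v_b$) into some \C; there is no ``escape hatch'' in which they could end up in \l rather than a \C.

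Next I would extract the single time-overlap fact that drives everything: because $v_a$ and $v_b$ sit in the same \C $\c_i$ under algorithm $a$, defining property (a) of a clustering forces $\p(v_a) \cap \p(v_b) \neq \emptyset$. This is an intrinsic fact about the two VMs, independent of which algorithm we use to cluster them.

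Now I would argue by contradiction. Let $a'$ be any other clustering algorithm. By the first step, $a'$ partitions all VMs into \Cs, so there exist \Cs $\c'_\alpha$ and $\c'_\beta$ (in the \CSet produced by $a'$) with $v_a \in \c'_\alpha$ and $v_b \in \c'_\beta$. Suppose for contradiction that $\c'_\alpha \ne \c'_\beta$. Then defining property (b) of a clustering applied to $a'$ gives $\p(v_a) \cap \p(v_b) = \emptyset$, directly contradicting the overlap deduced in the previous paragraph. Hence $\c'_\alpha = \c'_\beta$, and $v_a$ and $v_b$ lie in the same \C under $a'$.

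There is no real obstacle here: the proof is essentially a two-line contrapositive once Lemma~\ref{lemma:all 0} has cleared away the possibility of either VM landing in \l. The only point to double-check when writing it out carefully is that properties (a) and (b) are invoked with respect to the correct clustering (i.e., (a) for algorithm $a$ to get the overlap, and (b) for algorithm $a'$ to get the contradiction), so that the two defining properties of a clustering are not conflated across different algorithms.
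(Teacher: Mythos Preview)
Your approach is essentially the same as the paper's: both argue by contradiction, using property~(a) applied to algorithm $a$ to obtain $\p(v_a)\cap\p(v_b)\ne\emptyset$ and property~(b) applied to the second algorithm to obtain $\p(v_a)\cap\p(v_b)=\emptyset$. The only difference is that you explicitly invoke Lemma~\ref{lemma:all 0} to ensure that under any other algorithm both $v_a$ and $v_b$ actually land in some \C (rather than in \l), whereas the paper's proof simply assumes ``clustered into different \Cs'' without addressing that possibility; your version is slightly more careful on this point, but the core argument is identical.
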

	\begin{proof}
		Assume that $v_a$ and $v_b$ are clustered into different \Cs produced by a certain clustering algorithm. Then, $\p(v_a) \cap \p(v_b) = \emptyset$. However, it can be inferred that $\p(v_a) \cap \p(v_b) \ne \emptyset$ since $v_a$ and $v_b$ are both clustered into $\c_{i}$ by algorithm $a$, which leads to a contradiction.
	\end{proof}
	\begin{lemma}
		If one clustering algorithm results in $|\l|$ = 0, then all clustering algorithms will yield the same results.
		\label{lemma:same results}
	\end{lemma}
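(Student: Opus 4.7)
The plan is to reduce Lemma~\ref{lemma:same results} to the two previous lemmas. Since the result concerns partitions, what I really need to show is that the binary relation ``$v_a$ and $v_b$ belong to the same \C'' is algorithm-independent under the hypothesis $|\l|=0$. Once that equivalence relation is pinned down, the partition into \Cs is uniquely determined, and every clustering algorithm must produce it.

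First I would invoke Lemma~\ref{lemma:all 0} to conclude that if algorithm $a$ yields $|\l_a|=0$, then every other clustering algorithm $b$ also satisfies $|\l_b|=0$. This is crucial because it allows me to apply Lemma~\ref{lemma:same C} with \emph{either} algorithm playing the distinguished role. Next, pick any two VMs $v_a,v_b$ and consider two directions. For the forward direction, if $v_a$ and $v_b$ lie in the same \C under $a$, then Lemma~\ref{lemma:same C} (applied with $a$) yields immediately that they lie in the same \C under $b$. For the reverse direction I would argue by contrapositive: if $v_a,v_b$ lie in the same \C under $b$, then because $b$ also yields $|\l_b|=0$, Lemma~\ref{lemma:same C} applies symmetrically with $b$ in the distinguished role, so $v_a$ and $v_b$ must lie in the same \C under $a$.

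Combining these two directions shows that the ``same-\C'' relation coincides for algorithms $a$ and $b$, so the induced partitions of $vmSet$ are identical; together with $|\l_a|=|\l_b|=0$, this means the full outputs (\cset, \l) of $a$ and $b$ agree. Since $b$ was arbitrary, all clustering algorithms yield the same result.

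The main subtle point—essentially the only place where something could go wrong—is the reverse direction, since Lemma~\ref{lemma:same C} is stated asymmetrically (it fixes a particular algorithm $a$ with $|\l|=0$ on one side). The trick is that Lemma~\ref{lemma:all 0} lets me bootstrap symmetry: once I know every algorithm yields $|\l|=0$, I can freely swap which algorithm plays the role of the hypothesis in Lemma~\ref{lemma:same C}. Aside from this symmetry step, the argument is a straightforward application of the earlier lemmas, so I do not expect any further obstacles.
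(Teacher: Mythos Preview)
Your proposal is correct and follows essentially the same approach as the paper: invoke Lemma~\ref{lemma:all 0} so that every algorithm has $|\l|=0$, then use Lemma~\ref{lemma:same C} to conclude that the ``same-\C'' relation, and hence the partition, is algorithm-independent. The paper phrases the last step as a contradiction (pick $v_x\in\c_{1i}\cap\c_{2j}$ with $\c_{1i}\neq\c_{2j}$ and find a witness $v_y$), whereas you spell out the two directions and make the symmetry step via Lemma~\ref{lemma:all 0} explicit; this is the same argument, just more carefully articulated.
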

	\begin{proof}
		For any two clustering algorithms $a_1$ and $a_2$, if $a_1$ yields $\cset_1=\{\c_{11},$ $\c_{12},…,\c_{1m}\}$, then $a_2$ yields $\cset_2=\{\c_{21}, \c_{22},…,\c_{2n}\}$ without the LS because of Lemma~\ref{lemma:all 0}. Assume that $v_x$ belongs to $\c_{1i}$ and $\c_{2j}$. If $\c_{1i}$ is different from $\c_{2j}$, then a VM $v_y$ must exist such that either $v_y \in \c_{1i}$ and $v_y\notin \c_{2j}$ or $v_y \notin \c_{1i}$ and $v_y\in \c_{2j}$. Thus, $v_x$ and $v_y$ are clustered into the same \C only in $\cset_1$ or $\cset_2$, which contradicts Lemma~\ref{lemma:same C}.
	\end{proof}

	\begin{theorem}
		$\text{DCBB}(vmSet) \leq \text{OPT}(vmSet) + |\l|$.
		\label{theorem:upper bound}
	\end{theorem}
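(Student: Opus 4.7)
The plan is to bound $\text{DCBB}(vmSet)$ by separately accounting for the servers consumed by Steps~2 and~3 of the algorithm (BB on the \Cs and DDFF$^{+}$ on \l, respectively) and then summing the two contributions.

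For the BB-on-\Cs part, I would invoke the same reusability reasoning that underlies \Cref{lemma: 0 to optimal}: because the clustering guarantees that VMs from different \Cs have pairwise disjoint execution windows, at every time instant $t$ at most one $\c_i$ is active, so the BB subsolutions for the members of $\cset$ can be overlaid on a single shared pool of physical servers without violating the per-instant capacity constraints in \Cref{eq:lp2}. Hence this part of DCBB's allocation uses at most $\max_{\c_i \in \cset}\text{BB}(\c_i)$ distinct servers. Since BB is exact and $\c_i \subseteq vmSet$, we have $\text{BB}(\c_i) = \text{OPT}(\c_i) \le \text{OPT}(vmSet)$ for each $i$, and therefore $\max_{\c_i \in \cset}\text{BB}(\c_i) \le \text{OPT}(vmSet)$.

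For the DDFF$^{+}$-on-\l part, each VM in \l is assigned to exactly one server, so this part of the allocation uses at most $|\l|$ distinct servers regardless of how the interior of the algorithm behaves. Combining the two sub-allocations then yields
\begin{equation*}
\text{DCBB}(vmSet) \;\le\; \max_{\c_i \in \cset}\text{BB}(\c_i) + \text{DDFF}^{+}(\l) \;\le\; \text{OPT}(vmSet) + |\l|,
\end{equation*}
which is exactly the claim.

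The main obstacle will be making the ``overlay onto a shared pool'' step watertight in the heterogeneous, multidimensional setting: independent BB invocations may select structurally different subsets of \ServerSet for their respective $\c_i$, and one must exhibit an explicit remapping onto a common set of $\max_{\c_i \in \cset}\text{BB}(\c_i)$ physical servers. The key is that disjointness of the time windows causes the $\forall t$ quantifier in \Cref{eq:lp2} to decouple across \Cs, so one can rank the servers chosen by each subsolution by capacity and assign them to shared slots drawn from the top of \ServerSet. Since this is essentially the same manoeuvre the authors rely on when asserting $\text{DCBB}(\bigcup \c_i) = \max_{\c_i \in \cset}\text{BB}(\c_i)$ inside \Cref{lemma: 0 to optimal}, I would invoke that lemma rather than redo the mapping from scratch.
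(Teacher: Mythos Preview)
Your proposal is correct and follows essentially the same route as the paper's proof: split the cost into the BB-on-\Cs contribution and the DDFF$^{+}$-on-\l contribution, bound the former by $\text{OPT}(vmSet)$ via the overlay argument of \Cref{lemma: 0 to optimal} together with monotonicity of OPT on subsets, and bound the latter trivially by $|\l|$. The paper phrases the first bound as $\text{DCBB}(vmSet') = \text{OPT}(vmSet') \le \text{OPT}(vmSet)$ with $vmSet' = vmSet \setminus \l$ (and adds a redundant case split on $|\l|=0$), whereas you write it as $\max_{\c_i\in\cset}\text{BB}(\c_i) \le \text{OPT}(vmSet)$; these are the same quantity by \Cref{lemma: 0 to optimal}, so the arguments coincide.
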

	\begin{proof}
		The proof can be divided into two separate cases:
		\begin{enumerate}[(i)]
			\item When $|\l| = 0$, $\text{DCBB}(vmSet) = \text{OPT}(vmSet)$ according to Lemma~\ref{lemma:all 0}.
			\item When $|\l| > 0$, let $vmSet' = vmSet - |\l|$. Since $vmSet'$ can be clustered into \Cs, 
			the following equation is satisfied according to Lemma~\ref{lemma:all 0}:
			\begin{equation}
			\begin{split}
			\text{DCBB}(vmSet') & = \text{OPT}(vmSet') \\
			& \leq \text{OPT}(vmSet) 
			\end{split}
			\end{equation}
			The worst case for DDFF($vmSet$) is $|\l|$ when each VM in $\l$ needs to be scheduled to a different server. Thus, it can be inferred that
			\begin{equation}
			\begin{split}
			\text{DCBB}(vmSet) & = \text{DCBB}(vmSet' \cup \l) \\
			& \leq \text{DCBB}(vmSet') + \text{DDFF}(\l) \\
			& \leq \text{OPT}(vmSet) + |\l|
			\end{split}
			\end{equation}
		\end{enumerate}
		Consequently, Theorem~\ref{theorem:upper bound} is satisfied for any $|\l|$.
	\end{proof}
	
	As shown above, no resource-related features are involved in the deductions of \Cref{lemma:all 0,lemma: 0 to optimal,lemma:same C,lemma:same results} and Theorem~\ref{theorem: optima}, which indicates that the conclusions satisfy requirements \textbf{R1} and \textbf{R2} well. Furthermore, Theorem~\ref{theorem: optima}, Lemma~\ref{lemma:all 0}, and Lemma~\ref{lemma: 0 to optimal} represent the typical conditions under which \textbf{R3} can be best fulfilled. Since all VMs can be clustered into several independent VM sets that do not overlap with each other, DCBB can achieve the optimal solution by merging the subsolutions for each subset under these conditions. Moreover, because all the VMs can be clustered into either a certain \C or the \L, based on which they will later be scheduled to a certain server, \textbf{R4} is not violated during the clustering process. In addition, Theorem~\ref{theorem:upper bound} presents the upper bound of our proposed DCBB algorithm.

	\section{Implementation and Experiments}
	\label{sec:implementation and experiments}
	In this section, experimental results are presented to evaluate and compare various algorithms, including BB, DCBB, OEMACS$^{+}$, DDFF$^{+}$, and FF$^{+}$, in our proposed heterogeneous and multidimensional CDBP model. The main metrics used for evaluation are the accuracy and execution time of each algorithm. As mentioned in \Cref{sec:related works}, the scheduling problem in the proposed model is NP-hard; thus, obtaining an optimal solution (the least \#servers required) in a reasonable time is computationally infeasible. Therefore, as widely adopted in the literature \cite{xiao2015solution,gupta2018resource,luo2014hybrid,liu2016energy,quang2013genetic}, we assess the accuracy on the basis of the \#servers, where a smaller \#servers indicates a higher accuracy. In the following, \Cref{subsec: workload design} first introduces the workloads. Then, \Cref{subsec:real world,subsec:convergence speed,subsec:shuffle,subsec:number,subsec:time} describe experiments conducted to answer the following questions: 
	
	\begin{enumerate}[start=1,label={\upshape\bfseries Q\arabic*:},wide = 0pt, leftmargin = 2.2em]
		\item How do the algorithms perform on a real-world workload? (\Cref{subsec:real world})
		\item What are the convergence rates of search-based algorithms, such as DCBB, BB, and OEMACS$^{+}$, under the proposed model? (\Cref{subsec:convergence speed})
		\item How does the shuffling process affect the performance of FF-based algorithms in a multidimensional environment? (\Cref{subsec:shuffle})
		\item How do the performances of the algorithms change with increasing problem scale?  (\Cref{subsec:number})
		\item How do time factors (i.e., the arrival times and durations of VMs) affect the performances of the algorithms? 	(\Cref{subsec:time})
	\end{enumerate}
	Finally, \Cref{subsec:summary} summarizes the experimental results.
	
	\subsection{Workloads}
	\label{subsec: workload design}
        A real-world workload is considered to observe the practical performances of the algorithms. In addition, synthetic workloads are generated to observe the influence of the \#VMs and time distribution on the algorithms. Furthermore, as shown in \Cref{table:workload}, 8 types of VMs were selected from the Amazon Elastic Compute Cloud (EC2) \footnote{https://aws.amazon.com/ec2/} to serve as workloads, and 3 types of servers were selected on the basis of the products available from Inspur Technologies Co., Ltd. \footnote{http://en.inspur.com/inspur/}, to make the experimental environment more similar to a real-world scenario. The selected types of both VMs and servers include CPU-intensive, memory-intensive and SSD-intensive representatives to cover a general set of cases.
	\begin{table}
		\centering
		\caption{Types of servers and VMs}
		\label{table:workload}       
		\begin{tabular}{*{4}{c}}
			\hline\noalign{\smallskip}
			&	 \#(V)CPUs & Memory (GB) & SSD (GB)             \\
			\noalign{\smallskip}\hline\noalign{\smallskip}
			\multirow{3}*{servers} &  16 &32  &160       \\
			& 8   &32  &160       \\
			& 8   &64  &320       \\
			\noalign{\smallskip}\hline\noalign{\smallskip}
			\multirow{8}*{VMs} & 1   &3.75&4         \\
			& 2   &7.5 &32       \\
			& 4   &15  &80       \\
			& 2   &3.75&32       \\
			& 4   &7.5 &80       \\
			& 8   &15  &160       \\
			& 2   &15.25&32       \\
			& 4   &30.5&80       \\
			\noalign{\smallskip}\hline
		\end{tabular}
	\end{table}
	
	In addition, a uniform distribution $U(a,b)$ and a Gaussian distribution $N(\mu,\sigma^2)$ are used to simulate the arrival times and durations, respectively of the VMs. 
	
	The details of the workloads are as follows.
	
	\begin{itemize}
		\setlength\itemsep{0.8em}
		\item{\textbf{Workload \RNum{1}: real-world workload}} \\
		Considering the completeness and quality of the workloads, we selected two real-world datasets, namely, ``RICC'' and ``UniLuGaia'', from the ``Logs of Real Parallel Workloads from Production Systems'' \cite{Feitelson2017workload} to evaluate and compare the accuracy and efficiency of the algorithms. In contrast to synthetic data, these real-world datasets have long time spans, sparse VM distributions, and occasionally incomplete records. Thus, data cleaning was first performed on the two datasets to improve the significance of the experiments. Considering the computational capacity of the experimental environment, 500 qualified records extracted from each dataset were used to conduct the experiments.
		\item{\textbf{Workload \RNum{2}: varying \#VMs and fixed time distributions}} \\
		Workload \RNum{2}, in which the total \#VMs varies from 24 to 336 while the distributions of the VM arrival times and durations are fixed, as shown in \Cref{table:distribution}, was generated to illustrate the influence of the \#VMs.
		\begin{table}
			\centering
			\caption{Distributions of the arrival times and durations of the VMs and their default parameters}
			\label{table:distribution}       
			\begin{tabular}{llll}
				\hline\noalign{\smallskip}
				Type	& Distribution  &Parameter $1$ &Parameter $2$   \\
				\noalign{\smallskip}\hline\noalign{\smallskip}
				Arrival time&Uniform&$0$ $(a)$ &  $240$ $(b)$              \\        
				Duration&	Gaussian&	$360$ $(\mu)$& $60$ $(\sigma)$      \\
				
				\noalign{\smallskip}\hline
			\end{tabular}
		\end{table}
		
		\item{\textbf{Workload \RNum{3}: fixed \#VMs and varying time distributions}}\\
		Workload \RNum{3} was designed to study the influence of time distributions on the algorithms. The total \#VMs of this workload is fixed at 160. For the time distributions, both the upper bound $b$ on the arrival times and the mean duration $\mu$ vary between 60 s and 420 s, while the lower bound $a$ on the arrival times and the variance $\sigma$ of the durations remain unchanged, as shown in \Cref{table:distribution}.  
	\end{itemize}

	The scheduling algorithms were evaluated and compared using the above workloads in a KVM-based VM with 8 VCPUs and 16 GB of memory. A private cloud platform was built using OpenStack \footnote{https://www.openstack.org/} to observe the performances of the proposed model and algorithms. A simulated environment was also established in which to conduct large-scale experiments. In the following sections, we do not differentiate the real and simulated experimental environments since they do not affect the scheduling results.
	
	\subsection{Experiment on the Real-World Workload}
	\label{subsec:real world}
	\begin{figure}
		\begin{subfigure}{0.5\textwidth}
			\centering
			\includegraphics[width=1\linewidth]{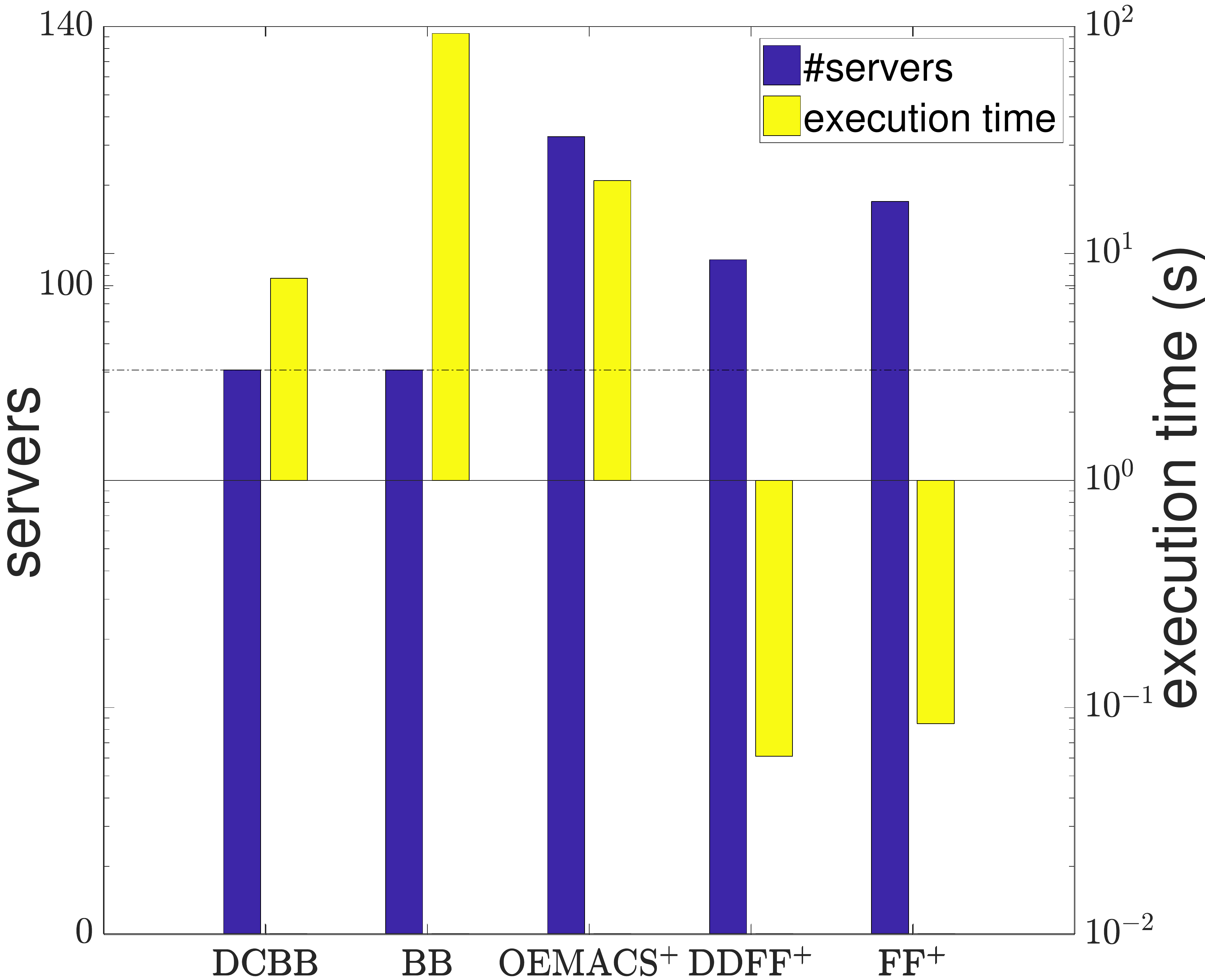}
			\caption{RICC dataset}
			\label{fig:real:1}
		\end{subfigure}%
		\begin{subfigure}{0.5\textwidth}
			\centering
			\includegraphics[width=1\linewidth]{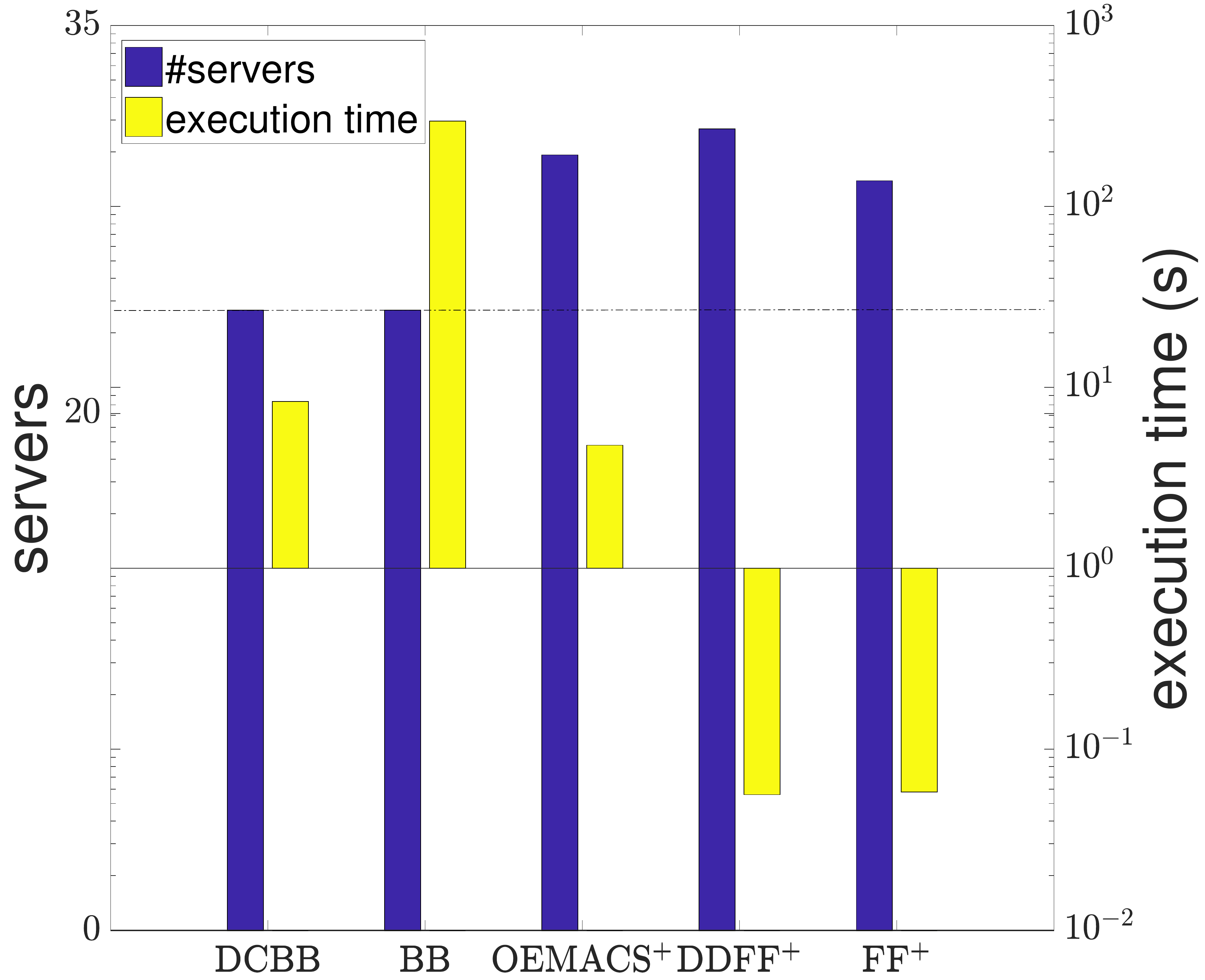}
			\caption{UniLuGaia dataset}
			\label{fig:real:2}
		\end{subfigure}
		\caption{Comparisons of the algorithms using real-world datasets. The fewest \#servers achieved by the algorithms are marked with dotted lines.}
		\label{fig:real}
	\end{figure} 
	In this subsection, Workload \RNum{1} is employed to check the performances of the algorithms on real-world datasets, which is a key component of the evaluation and comparison. The results are shown in \Cref{fig:real}, with dotted lines indicating the fewest \#servers required by the algorithms. \par
	First, the execution times are compared. As shown, DDFF$^{+}$ and FF$^{+}$ require less than 0.1 s to yield the solutions, which is much less than the times required by the other algorithms. The execution times of DCBB and OEMACS$^{+}$ are approximately dozens of seconds, whereas BB requires the most time – several hundreds of seconds. \par
	In terms of the \#servers required by each algorithm, DCBB and BB achieve the optimal results, requiring 19.46\% and 20.13\% fewer servers on average than the DDFF$^{+}$ and FF$^{+}$ algorithms do, respectively. DDFF$^{+}$ requires the third fewest servers on the RICC dataset; however, it has the worst accuracy on the UniLuGaia dataset. OEMACS$^{+}$ and FF$^{+}$ have accuracies similar to that of DDFF$^{+}$. \par
	To summarize, DCBB achieves the same optimal solution as BB does with an execution time that is an order of magnitude shorter. Moreover, OEMACS$^{+}$ requires nearly the largest \#servers with a relatively long execution time, which may be caused by the additional problem complexity introduced by the additional time and resource dimensions. Furthermore, the FF-based algorithms can produce a scheduling solution within a trivial execution time, indicating that they are suitable for real-time scheduling. In the following subsections, more comprehensive analyses of the algorithms will be presented based on synthetic data.
	
	\subsection{Convergence Rate Comparison}
	\label{subsec:convergence speed}
	As search-based algorithms, DCBB, BB, and OEMACS$^{+}$ can deliver better solutions given longer execution times, up to the time when the optimal solution is found. In particular, BB can theoretically always produce an optimal scheduling solution given enough time. However, the execution time cannot be arbitrarily long. Thus, the convergence rates of these algorithms should be studied to evaluate their performance and achieve a suitable compromise between accuracy and efficiency. In this subsection, DCBB, BB, and OEMACS$^{+}$ are applied to Workload \RNum{2} to compare the convergence rates of these algorithms. \par 
	
	\begin{figure}
		
		\begin{subfigure}{0.5\textwidth}
			\centering
			\includegraphics[width=0.8\linewidth]{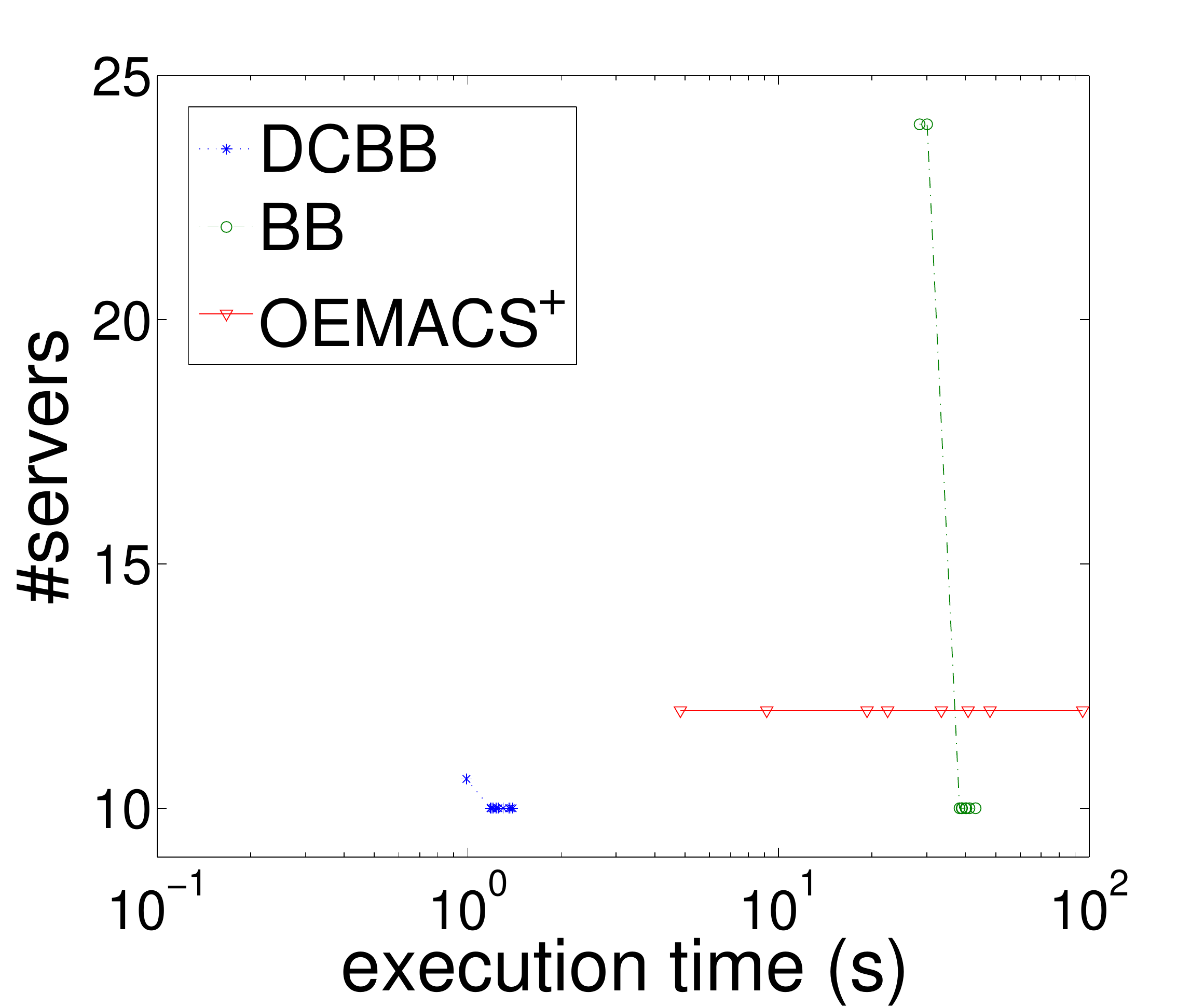}
			\caption{\#VMs = 24}
			\label{fig:conSpe:1}
		\end{subfigure}%
		\begin{subfigure}{0.5\textwidth}
			\centering
			\includegraphics[width=.8\linewidth]{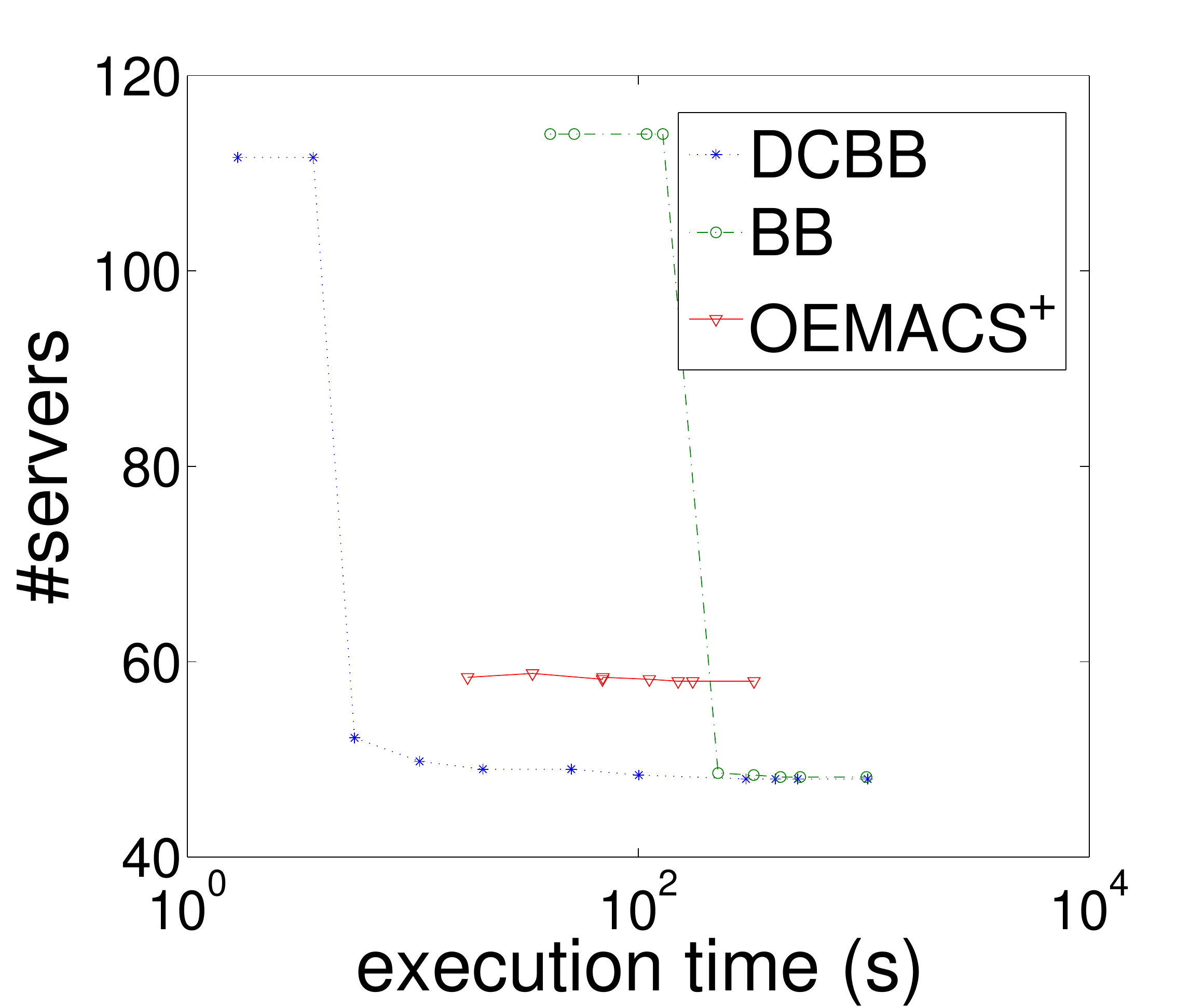}
			\caption{\#VMs = 120}
			\label{fig:conSpe:2}
		\end{subfigure}
		\bigskip
		\begin{subfigure}{0.5\textwidth}
			\centering
			\includegraphics[width=.8\linewidth]{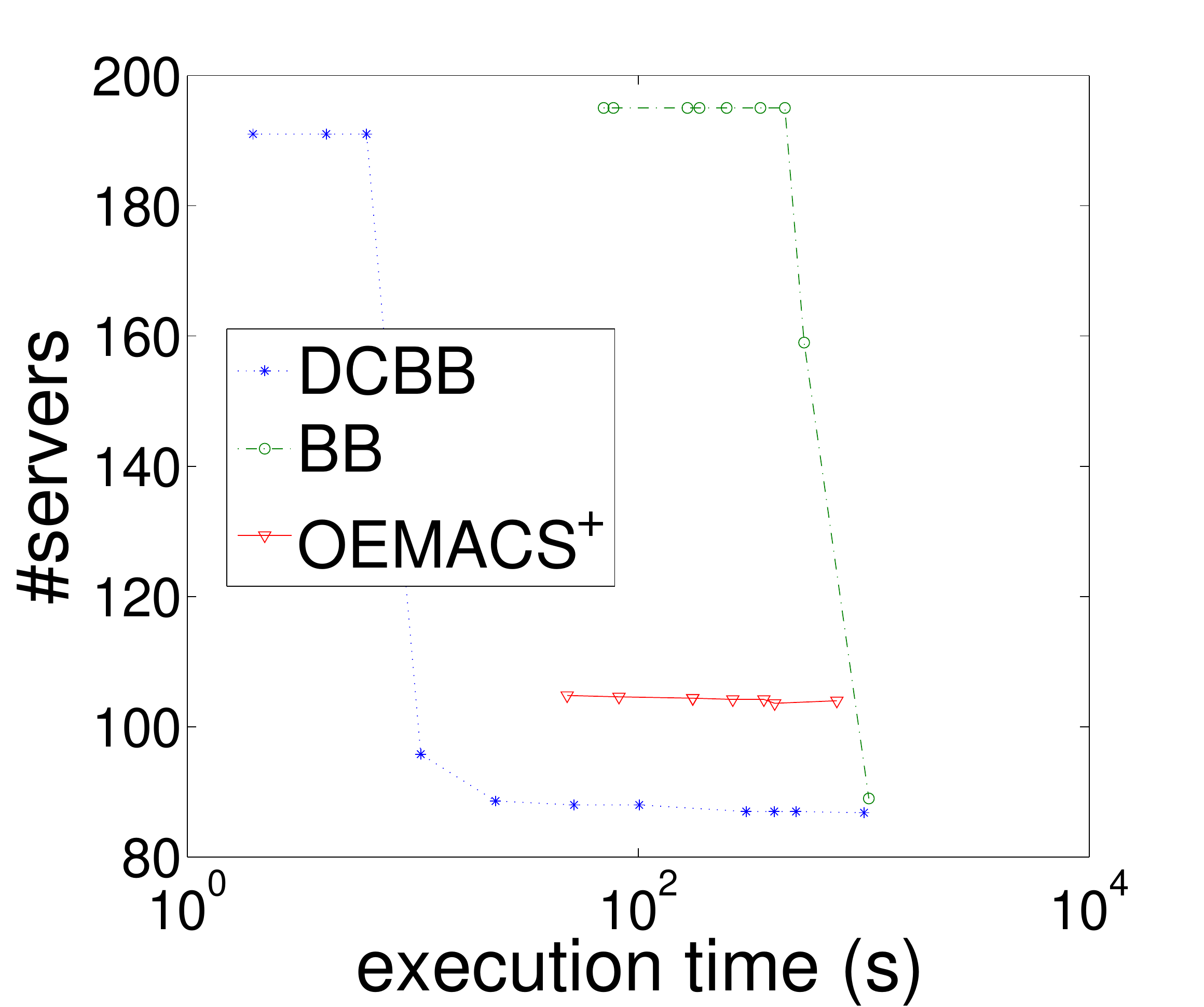}
			\caption{\#VMs = 216}
			\label{fig:conSpe:3}
		\end{subfigure}%
		\begin{subfigure}{0.5\textwidth}
			\centering
			\includegraphics[width=.8\linewidth]{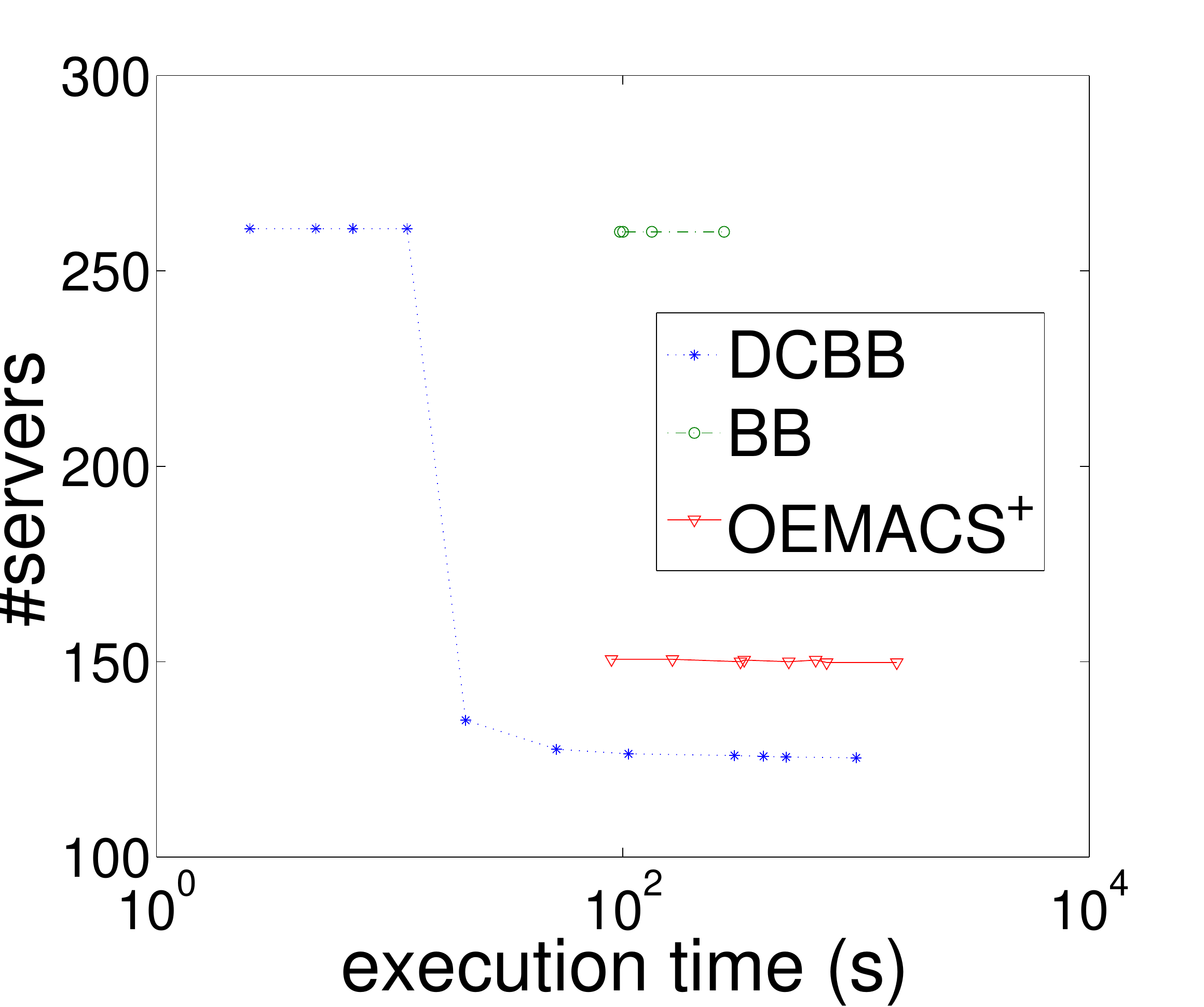}
			\caption{\#VMs = 312}
			\label{fig:conSpe:4}
		\end{subfigure}
		\caption{Comparison of the convergence rates}
		\label{fig:conSpe}
	\end{figure}
	\Cref{fig:conSpe} shows that the total \#servers required by DCBB decays exponentially with an increasing execution time, and thus, the convergence rate of this algorithm becomes the fastest. Furthermore, DCBB converges before 50 s in most cases. In contrast, the convergence rate of BB is much slower, with a nearly linear decay after a period of unchanging results. The missing data of BB after 50 s in \Cref{fig:conSpe:4}, where the \#VMs is 312, is caused by the excessive computational resource requirements of this algorithm. In the other cases shown in \Cref{fig:conSpe:1,fig:conSpe:2,fig:conSpe:3}, BB obtains nearly convergent results after 1000 s. For OEMACS$^{+}$, the \#servers required remains almost unimproved as the execution time increases. \par
	It can be concluded that DCBB achieves the fastest convergence rate, OEMACS$^{+}$ yields nearly unchanging results over time, and BB has the slowest convergence rate. In the following subsections, time limits of 50 s and 1000 s are set for DCBB and BB, respectively, and an iteration limit of 5 is set for OEMACS$^{+}$ to balance the accuracy and efficiency of these algorithms according to the results shown in \Cref{fig:conSpe}.
	
	\subsection{Effectiveness of Shuffling}
	\label{subsec:shuffle}
	In this subsection, the improved algorithms FF$^{+}$ and DDFF$^{+}$ are compared with the original algorithms FF and DDFF using Workload \RNum{2} to observe the effectiveness of the shuffling process.  \par
	\begin{figure}
		
		\begin{subfigure}{0.5\textwidth}
			\centering
			\includegraphics[width=0.9\linewidth]{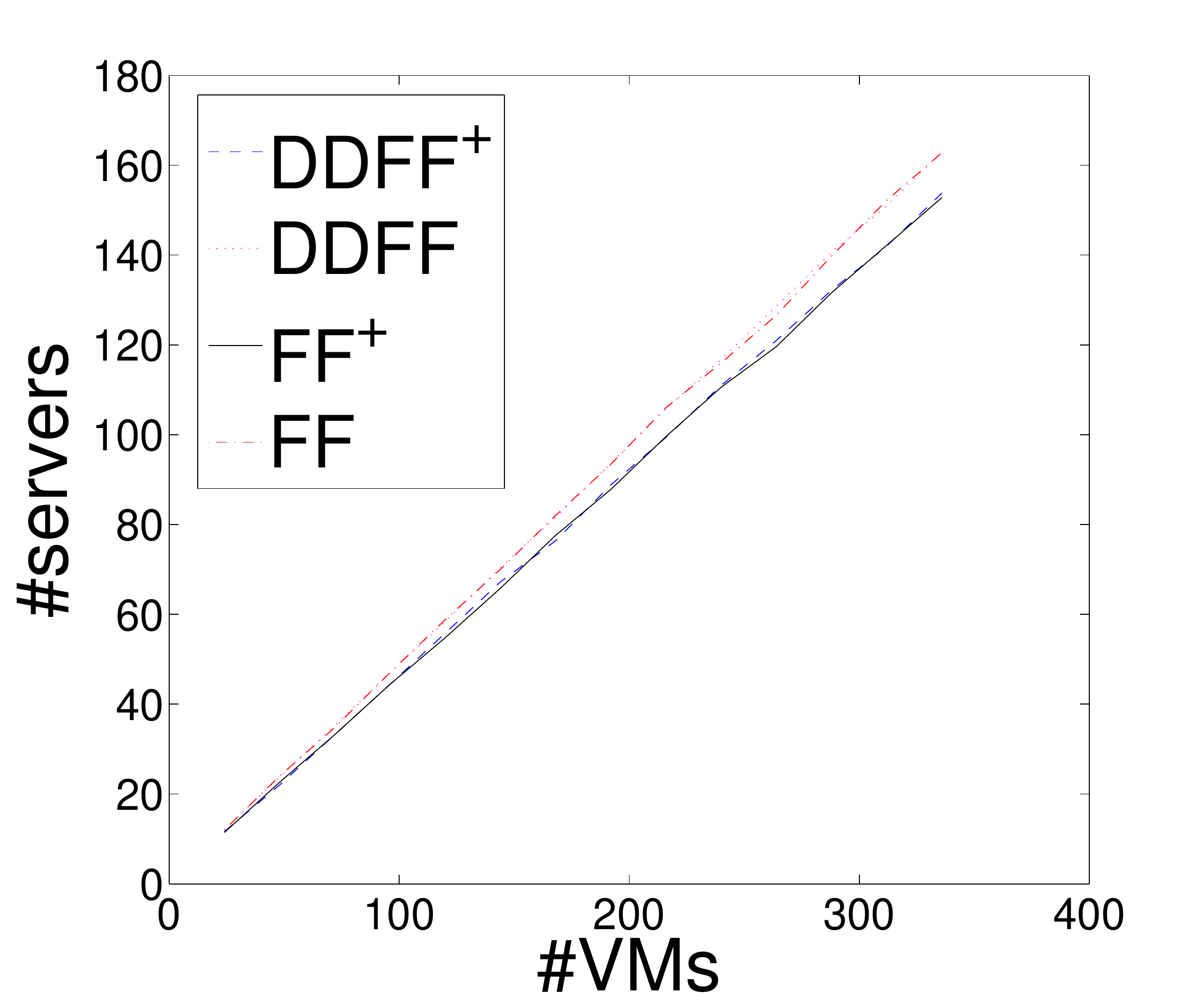}
			\caption{Effect of shuffling on \#servers}
			\label{fig:shuffle:1}
		\end{subfigure}%
		\begin{subfigure}{0.5\textwidth}
			\centering
			\includegraphics[width=0.89\linewidth]{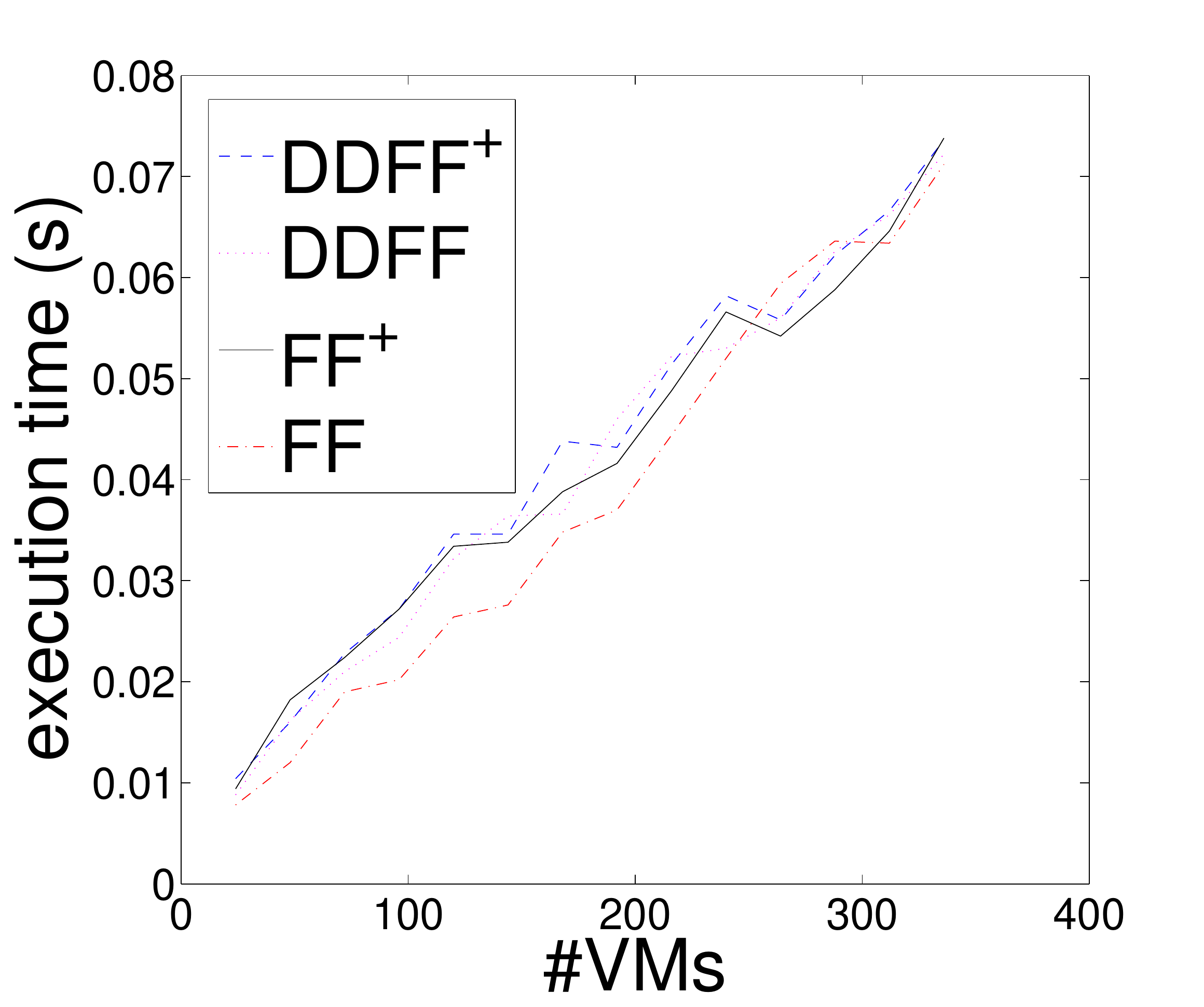}
			\caption{Effect of shuffling on execution time}
			\label{fig:shuffle:2}
		\end{subfigure}
		\caption{Effects of the shuffling process}
		\label{fig:shuffle}
	\end{figure}
	
	In \Cref{fig:shuffle:1}, the lines representing the \#servers required by FF$^{+}$ and DDFF$^{+}$ lie below those for FF and DDFF, indicating that the shuffling process reduces the total \#servers required. Moreover, the effectiveness of the shuffling process becomes more evident as the \#VMs increases. Furthermore, the two nearly overlapping lines in~\Cref{fig:shuffle:1} indicate that the duration-descending process does not have much impact on the \#servers under our proposed heterogeneous and multidimensional CDBP model.  \par
	Regarding the execution time, \Cref{fig:shuffle:2} implies that the shuffling process does not incur much extra time. Although FF and DDFF require shorter execution times when the \#VMs is small, the difference disappears as the \#VMs increases. The extra execution time incurred by the shuffling process is thus regarded as trivial compared to the time required for the total scheduling process and the perturbations caused by different server orders.\par
	
	From the experimental results in this subsection, it can be concluded that the shuffling process can slightly reduce the \#servers required by FF and DDFF. Moreover, the additional time incurred by the shuffling process is trivial, particularly when the \#VMs is large.
	
	\subsection{Influence of the Number of Virtual Machines}
	\label{subsec:number}
	
	In this subsection, the algorithms are evaluated and compared on workload \RNum{2}, in which the \#VMs varies. \Cref{table:number} lists the average \#servers and execution time of each algorithm for each \#VMs. Data for BB are not available when the \#VMs exceeds 240 because the computational resources of the experimental environment are no longer sufficient to support BB in these cases.
	
	\begin{table}
		\caption{Evaluations and comparisons of the algorithms with various \#VMs. \#S and T denote the \#servers and the execution time, respectively. The fewest \#servers and the shortest execution time achieved among all algorithms are marked with \underline{underscores} and \textbf{bold text}, respectively. }
		\label{table:number}       
		\scriptsize
		\centering
		
		\begin{tabular*}{\textwidth}{*{10}{c|}c}
			\hline
			\tiny
			
			\multirow{2}*{\#VMs} &\multicolumn{2}{c}{DCBB} & \multicolumn{2}{|c}{BB} & \multicolumn{2}{|c}{OEMACS$^+$}  & \multicolumn{2}{|c}{DDFF$^+$} & \multicolumn{2}{|c}{FF$^+$}   \\ 
			\cline{2-11}
			&	\#S	&	T (s) 	&	\#S	&	T (s)	&	\#S	&	T (s)	&	\#S	&	T (s)	&	\#S	&	T (s)	\\ \hline
			
			24      &       \underline{10.0}        &       1.2     &       \underline{10.0}        &       39.0    &       12.0    &       4.8        &    11.6    &       0.010   &       11.4    &       \textbf{0.009}  \\ \hline
			48      &       20.0                    &       45.8    &       \underline{19.8}        &       831.0   &       23.6    &       8.3        &    22.0    &       0.016   &       22.6    &       \textbf{0.018}  \\ \hline
			72      &       29.2                    &       33.2    &       \underline{29.0}        &       479.1   &       34.8    &       9.7        &    33.2    &       0.023   &       33.2    &       \textbf{0.022}  \\ \hline
			96      &       \underline{39.0}        &       50.4    &       \underline{39.0}        &       1049.5  &       46.4    &       15.5       &    44.4    &       \textbf{0.027}  &       44.4    &       \textbf{0.027}  \\ \hline
			120     &       49.0                    &       50.5    &       \underline{48.2}        &       1027.6  &       58.4    &       17.5       &    55.8    &       0.035   &       54.8    &       \textbf{0.033}  \\ \hline
			144     &       59.0                    &       50.6    &       \underline{58.0}        &       1034.4  &       69.2    &       22.9       &    67.2    &       0.035   &       65.8    &       \textbf{0.034}  \\ \hline
			168     &       68.0                    &       51.3    &       \underline{67.8}        &       1037.2  &       81.2    &       28.1       &    76.4    &       0.044   &       77.6    &       \textbf{0.039}  \\ \hline
			192     &       \underline{78.4}        &       52.7    &       \underline{78.4}        &       1044.6  &       92.6    &       37.6       &    88.8    &       0.043   &       87.8    &       \textbf{0.042}  \\ \hline
			216     &       \underline{88.0}        &       51.9    &       89.0                    &       1053.8  &       104.8   &       48.4       &    99.4    &       0.051   &       99.6    &       \textbf{0.049}  \\ \hline
			240     &       \underline{97.8}        &       50.9    &       98.0                    &       1051.8  &       115.6   &       51.5       &    111.0   &       0.058   &       110.6   &       \textbf{0.057}  \\ \hline
			264     &       \underline{110.0}       &       52.3    &       NaN                     &       NaN     &       126.6   &       68.4       &    121.0   &       0.056   &       119.6   &       \textbf{0.054}  \\ \hline
			288     &       \underline{118.8}       &       52.4    &       NaN                     &       NaN     &       138.8   &       76.2       &    132.2   &       0.062   &       131.6   &       \textbf{0.059}  \\ \hline
			312     &       \underline{127.6}       &       51.9    &       NaN                     &       NaN     &       150.6   &       89.4       &    142.0   &       0.067   &       142.2   &       \textbf{0.065}  \\ \hline
			336     &       \underline{139.8}       &       56.4    &       NaN                     &       NaN     &       162.4   &       88.8       &    153.8   &       \textbf{0.074}  &       152.8   &       \textbf{0.074}  \\ \hline
			
		\end{tabular*}
	\end{table}

	\Cref{table:number} shows that DCBB yields better results than DDFF$^{+}$, FF$^{+}$, and OEMACS$^{+}$ do in much less time than that required by BB. Although BB generally yields a slightly smaller \#servers than DCBB does when the \#VMs is less than 192, it requires a longer execution time and an enormous amount of computational resources. Furthermore, DCBB yields the smallest \#servers when the \#VMs exceeds 192. The performance of OEMACS$^{+}$ is not desirable, as it often requires both the largest \#servers and a relatively long execution time. In addition, DDFF$^{+}$ and FF$^{+}$ always output similar results for the \#VMs in less than 0.1 s, which is a trivial execution time compared to those of the other algorithms. \par
	The results in this subsection show that a larger \#VMs can cause an increase in the resulting \#servers. Moreover, BB yields the smallest \#VMs, though with a long execution time, when the problem scale is small. However, ideally accurate algorithms (e.g., BB) cannot handle large-scale problems because of their prohibitive computational complexity. In addition, this experiment demonstrates that DCBB can achieve a suitable tradeoff between accuracy and efficiency, as it can output near-optimal results within 60 s. Although DDFF$^{+}$ and FF$^{+}$ do not yield better results than that of DCBB, they can produce a scheduling solution within less than 0.08 s, which is suitable for real-time scheduling.
	
	\subsection{Influence of Time Factors}
	\label{subsec:time}
	\begin{figure}
		\begin{subfigure}{0.45\textwidth}
			
			\includegraphics[width=\linewidth]{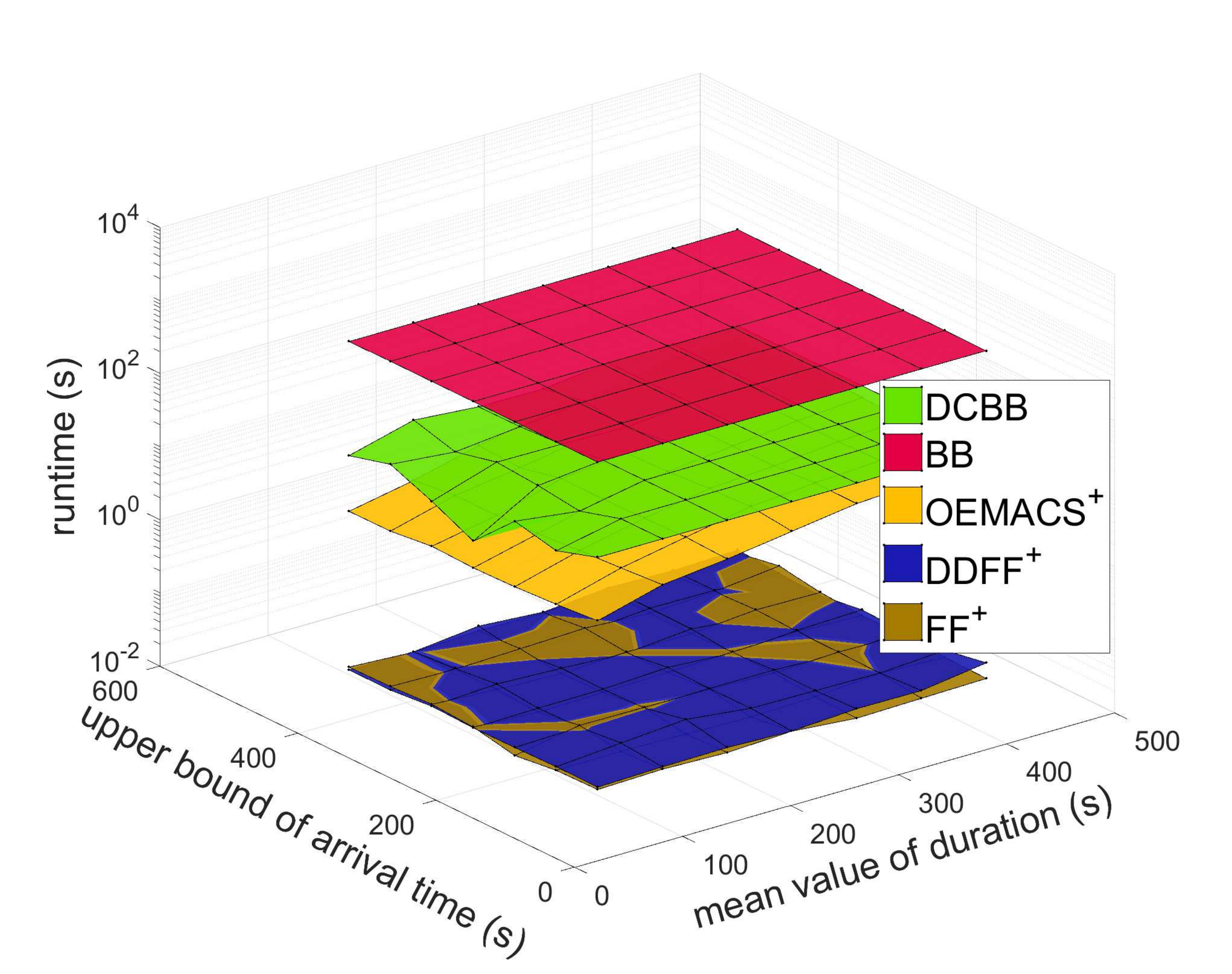}
			\caption{Execution times required by the algorithms under different VM time configurations}
			\label{fig:time:1}
			
		\end{subfigure} \hfill
		\begin{subfigure}{0.45\textwidth}
			
			\includegraphics[width=\linewidth]{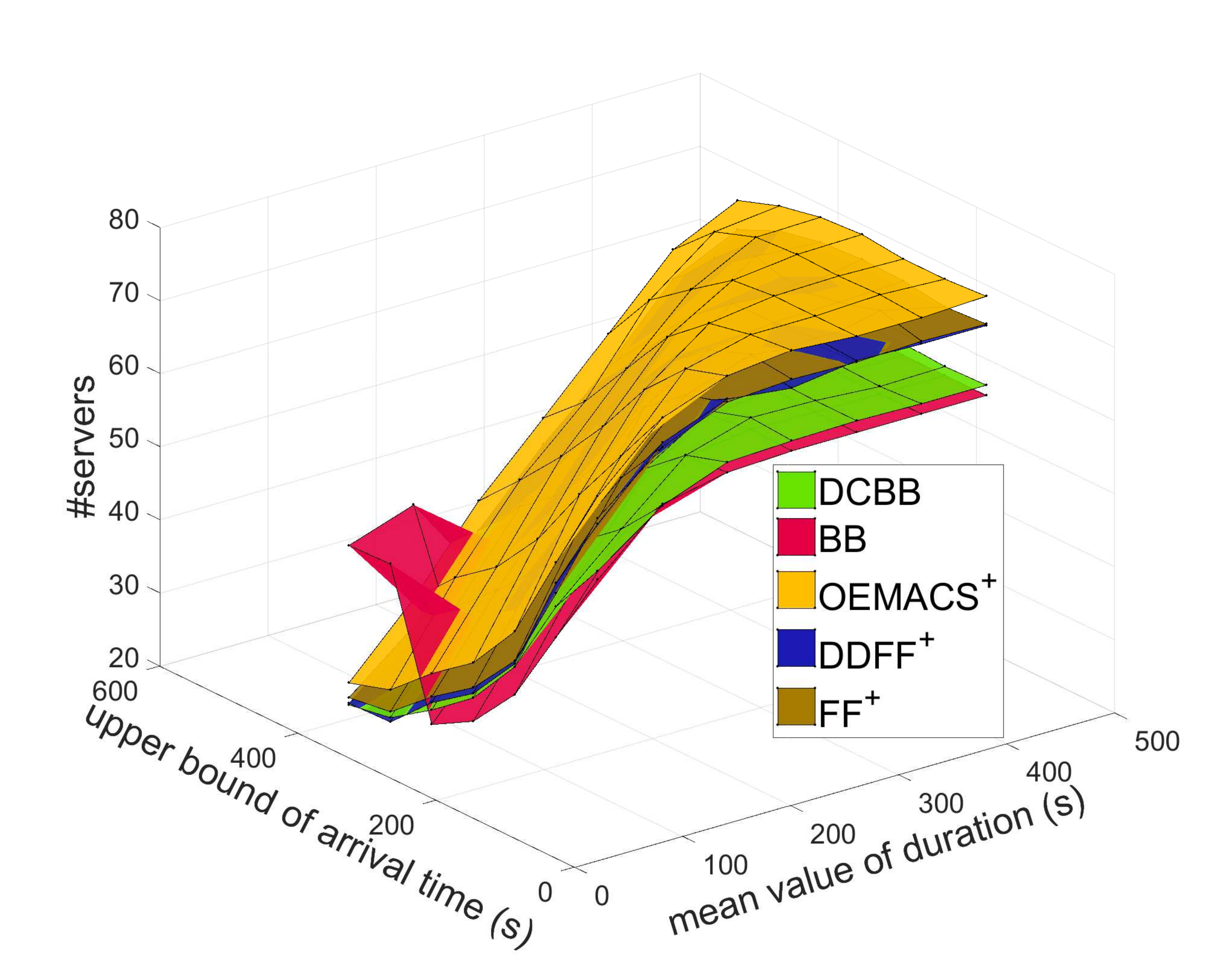}
			\caption{\#servers required by the algorithms under different VM time configurations}
			\label{fig:time:2}
		\end{subfigure}
		\bigskip
		\begin{subfigure}{0.45\textwidth}
			\centering
			\includegraphics[width=.9\linewidth]{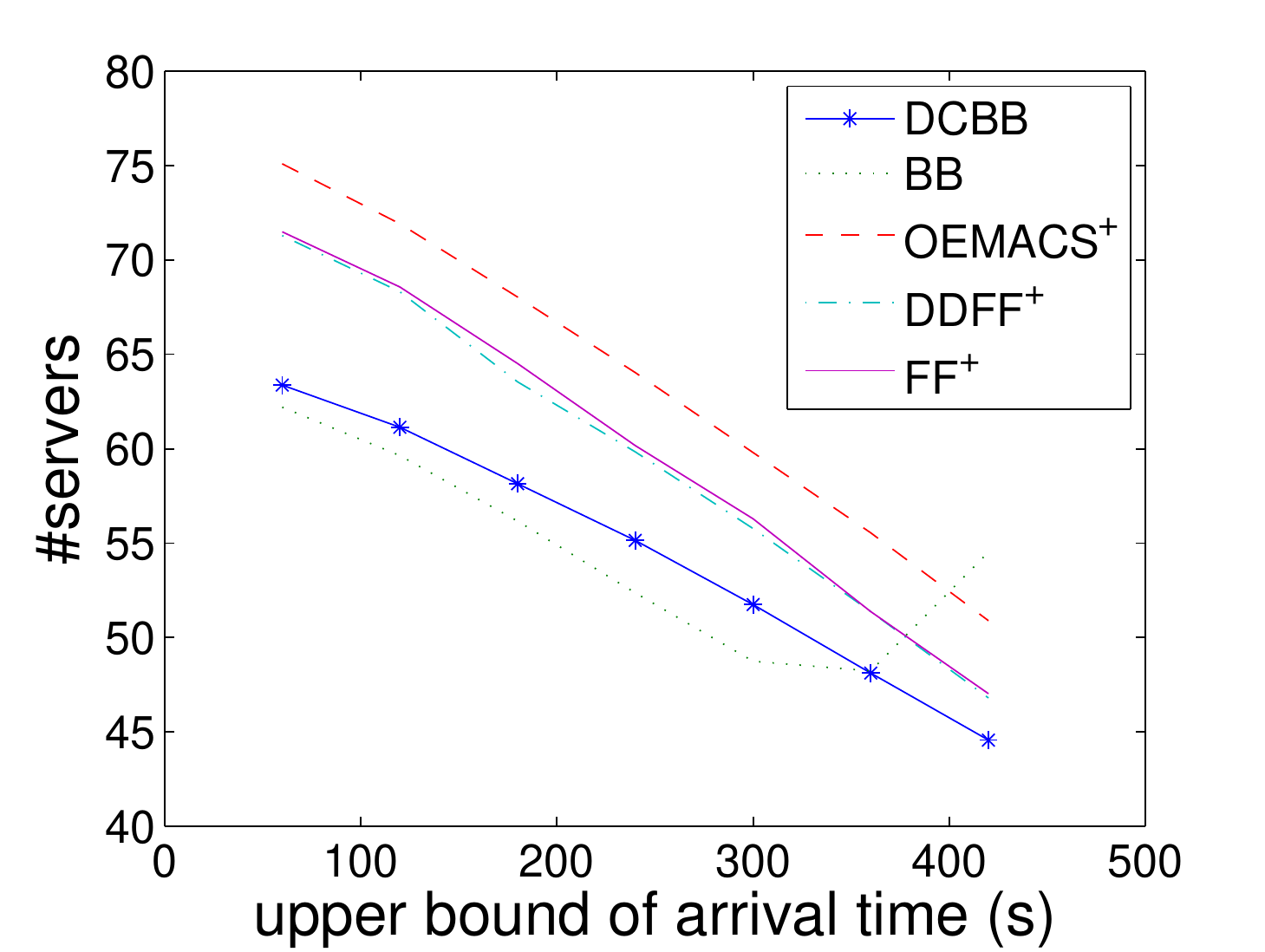}
			\caption{\#servers required versus the arrival time range}
			\label{fig:time:3}
		\end{subfigure} \hfill
		\begin{subfigure}{0.45\textwidth}
			\centering
			\includegraphics[width=.9\linewidth]{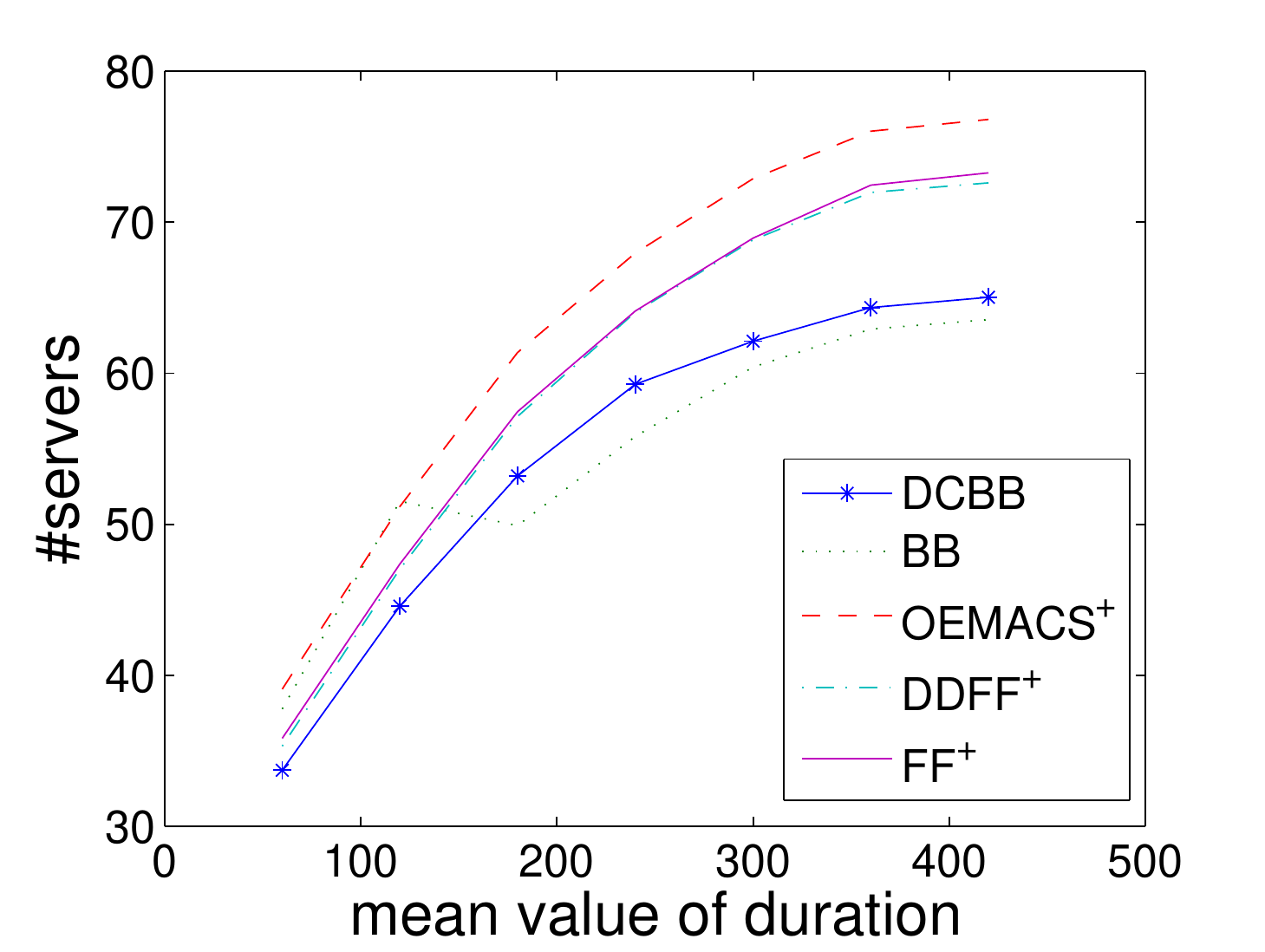}
			\caption{\#servers required versus the mean duration}
			\label{fig:time:4}
		\end{subfigure}
		\caption{Evaluations and comparisons of the algorithms with various VM time distributions}
		\label{fig:time}
	\end{figure}
	To investigate the influence of the VM arrival times and durations, the results obtained by the algorithms on Workload \RNum{3}, with varying distributions of the arrival times and durations of the VMs, are presented to show the resulting changes in the accuracy and efficiency of the different algorithms. \Cref{fig:time} compares the results obtained with various upper bounds on the arrival time and various mean durations. \par
	As shown in \Cref{fig:time:1}, the execution times of the algorithms do not change much with the variation of the time factors. Consistent with the previous results, BB takes the most time (approximately 1020 s), while DDFF$^{+}$ and FF$^{+}$ require only a trivial execution time (less than 0.1 s). Because BB does not terminate before the time limit (1000 s) is reached, it does not guarantee an optimal result. \par
	From a comprehensive analysis of \Cref{fig:time:2,fig:time:3,fig:time:4}, it can be concluded that a shorter duration and a wider arrival time range both result in a lower \#servers required by algorithms. These results are logical, as a sparser distribution of VMs can lead to a higher degree of time multiplexing, thus contributing to a smaller \#servers. Although BB generally requires the fewest \#servers, its execution time is long. Moreover, several outliers are produced by BB, as seen in \Cref{fig:time:2}, reflecting its instability under a time limit. Similar to the previous results, OEMACS$^{+}$ generally performs the worst in this experiment. Furthermore, DCBB yields the second smallest \#servers, as shown in \Cref{fig:time:2,fig:time:3,fig:time:4}. \par
	
	To summarize, a shorter duration and a wider arrival time range cause the algorithms to require a lower \#servers, possibly because of a higher degree of time multiplexing. The execution time does not vary much with different time parameter settings. Consistent with the previous results, DCBB yields a near-optimal solution within a relatively short execution time, BB achieves the lowest \#servers with the longest execution time, OEMACS$^{+}$ typically delivers unsatisfactory performance, and the FF-based algorithms have the fastest processing speed.
	
	\subsection{Summary}
	\label{subsec:summary}
	The experiments presented in \Cref{subsec:real world,subsec:convergence speed,subsec:shuffle,subsec:number,subsec:time} have answered questions Q1-Q5 posed at the beginning of \Cref{sec:implementation and experiments}:
	\begin{enumerate}[start=1,label={\upshape\bfseries A\arabic*.},wide = 0pt, leftmargin = 2.2em]
		\item BB and DCBB can both yield the optimal solutions on the real-world datasets considered here; however, DCBB requires an order of magnitude less time than BB does. Meanwhile, FF$^{+}$ and DDFF$^{+}$ can produce a scheduling solution within an insignificant execution time (less than 0.1 s), indicating that they are suitable for real-time scheduling. However, the performance of OEMACS$^{+}$ is not satisfactory in terms of either accuracy or efficiency. 
		\item DCBB has a much faster convergence rate than that of BB. As for OEMACS$^{+}$, its scheduling solutions show almost no improvement with increasing execution time.
		\item The shuffling process improves the accuracy of DDFF and FF with a trivial increase in the execution time.
		\item A larger \#VMs results in an increase in the \#servers required. Furthermore, ideally accurate algorithms such as BB have difficulty handling large-scale problems because of their prohibitive computational complexity.
		\item A shorter mean duration and a wider arrival time range for the VMs can result in a lower $\#$servers while exerting little influence on the execution time. 
	\end{enumerate}
	
	In addition, the experiments also demonstrate that the proposed algorithms satisfy requirements \textbf{R1-4} mentioned in \Cref{subsed: motivations and requirements}. Since the algorithms can handle Workloads I-III, for which the resources are multidimensional and the servers are heterogeneous, \textbf{R1} and \textbf{R2} are satisfied. Furthermore, \textbf{R3} is met because different VMs without overlapping execution times can share the same resources. \textbf{R4} is also fulfilled since no VM requests are rejected in the experiments. 
	
	Overall, the experimental data confirm that DCBB can yield near-optimal scheduling solutions while having faster convergence rate than the other evaluated search-based algorithms do. The results also demonstrate that the FF-based algorithms have the fastest processing speed and that BB can produce the best solution when the problem scale is small. In addition, the experimental results for OEMACS$^{+}$ are unsatisfactory, possibly because of the extra problem complexity introduced by the additional time and resource dimensions. Furthermore, a wider arrival time range and a shorter mean duration for the VMs both cause a lower \#servers to be required since they enable higher degrees of time multiplexing. \par

	\section{Conclusions and Future Work} 
	\label{sec:conclusion and future works}
	To lower the expensive upfront cost of private clouds, this paper proposes DCBB, an effective and efficient VMP algorithm that is applicable to the heterogeneous and multidimensional CDBP model, to reduce the \#servers required to accommodate VMs. The proposed model and algorithm employ time multiplexing to achieve more efficient and flexible scheduling. Theoretical analyses have been conducted to identify the upper bound and other features of DCBB. The experimental data clearly confirm the superiority of DCBB. It has been verified that DCBB can achieve near-optimal solutions while requiring significantly less execution time (by an order of magnitude on a real-world workload) than the BB algorithm does. The experimental results also show that DCBB has a much faster convergence rate than those of the other search-based algorithms evaluated. Although the BB algorithm can yield the optimal solution in theory, it requires a long execution time and a large amount of computational resources and shows unstable performance when given a time limit. Moreover, the accuracies of the DDFF and FF algorithms have been improved by including an additional shuffling process, and the resulting algorithms can be applied for real-time scheduling because of their trivial processing time. In addition, the experimental results demonstrate that OEMACS$^{+}$ does not deliver the expected performance under the proposed model, possibly because of the extra problem complexity introduced by the additional time and resource dimensions. Furthermore, the experiments indicate that, in addition to a lower \#VMs, a shorter mean duration and a wider arrival time range for the VMs can also cause a lower \#servers to be required due to the higher degree of time multiplexing that can be achieved in this case.\par
	Although extensive experiments have been conducted to evaluate and compare the algorithms considered here, the superiority of DCBB has not been fully theoretically proven. In addition, the influence of the adopted clustering algorithm on DCBB is not clear. Therefore, further theoretical analysis should be conducted to discover more features of DCBB and to enable further improvements.
	
	

	
	\clearpage


\end{document}